\documentclass[12pt]{article}
\usepackage[margin=1.25in]{geometry}

\usepackage{amssymb}
\usepackage{amsmath}
\usepackage{graphicx}
\usepackage{wrapfig}
\usepackage{authblk}
\usepackage{bbm}
\usepackage{float}
\usepackage{tikz}
\usetikzlibrary{arrows,calc,shapes,decorations.pathreplacing}
\usepackage{wrapfig}
\usepackage{hyperref}
\hypersetup{pdftex,colorlinks=true,allcolors=blue}
\usepackage{hypcap}
\usepackage{caption}
\usepackage{pgfplots}
\usepackage{subcaption}

\usepackage{algorithm}
\usepackage{algorithmicx}
\usepackage{algpseudocode}

\newenvironment{customlegend}[1][]{%
    \begingroup
    \csname pgfplots@init@cleared@structures\endcsname
    \pgfplotsset{#1}%
}{%
    \csname pgfplots@createlegend\endcsname
    \endgroup
}%

\def\addlegendimage{\csname pgfplots@addlegendimage\endcsname}

\newtheorem{theorem}{Theorem}[section]
\newtheorem{lemma}[theorem]{Lemma}
\newtheorem{proposition}[theorem]{Proposition}
\newtheorem{corollary}[theorem]{Corollary}

\newtheorem{conjecture}[theorem]{Conjecture}
\newtheorem{assumption}[theorem]{Assumption}

\newtheorem{remark}[theorem]{Remark}
\newtheorem{definition}[theorem]{Definition}

\newenvironment{proof}[1][Proof]{\begin{trivlist}
\item[\hskip \labelsep {\bfseries #1}]}{\end{trivlist}}

\def\P{\mathop{\mathrm{P}}}
\def\E{\mathrm{E}}
\def\W{\mathrm{W}}
\def\M{\mathrm{M}}
\def\N{\mathrm{N}}

\DeclareMathOperator*{\argmin}{arg\,min}


\newcommand{\qed}{\nobreak \ifvmode \relax \else
      \ifdim\lastskip<1.5em \hskip-\lastskip
      \hskip1.5em plus0em minus0.5em \fi \nobreak
      \vrule height0.75em width0.5em depth0.25em\fi}


\bibliographystyle{abbrv}

\begin{document}

\title{Strategic Bidding in an Accumulating Priority Queue: Equilibrium Analysis\footnote{To appear in the Annals of Operations Research}}

\author{Moshe Haviv and Liron Ravner\footnote{liron.ravner@mail.huji.ac.il}}
\affil{Department of Statistics and the Federmann Center for the Study of Rationality  \\ The Hebrew University of Jerusalem}

\date{\today}
\maketitle

\begin{abstract}
We study the strategic purchasing of priorities in a time-dependent accumulating priority M/G/$1$ queue. We formulate a non-cooperative game in which customers purchase priority coefficients with the goal of reducing waiting costs in exchange. The priority of each customer in the queue is a linear function of the individual waiting time, with the purchased coefficient being the slope. The unique pure Nash equilibrium is solved explicitly for the case with homogeneous customers. A general characterisation of the Nash equilibrium is provided for the heterogeneous case. It is shown that both avoid the crowd and follow the crowd behaviours are prevalent, within class types and between them. We further present a pricing mechanism that ensures the order of the accumulating priority rates in equilibrium follows a $C\mu$ type rule and improves overall efficiency.
\end{abstract}

\section{Introduction}\label{sec:Intro}
Service systems are often required to serve customers with heterogeneous characteristics, such as arrival rates, service demand, and waiting time sensitivity. A standard practice to address heterogeneity is incorporating priorities into the service regime. The most common priority regime is absolute priority in which customers are assigned priority and are admitted to service only when there is no higher priority customer in the queue. While it can be optimal in the sense of minimizing expected waiting times, when applying the well known $C\mu$ rule, the major shortcoming of this regime is that it can lead to very high expected waiting times for low priority customers. This may be undesirable if the system seeks to achieve some ``fairness'' objective too. Furthermore, in healthcare systems the condition of the patients may deteriorate while they wait. A possible modification of this regime is to allow customers to accumulate priority while waiting in the queue at varying rates, determined by their priority class (see \cite{K1964}). Our work introduces an economic analysis of an Accumulating Priority (AP) Queue, in which customers purchase their linear rates of accumulation.

We consider an unobservable M/G/$1$ queue with a non-preemptive AP regime. There is a discrete number of customer types, who may differ in their arrival rates, service distributions and linear waiting costs. Upon arrival every customer purchases a linear AP coefficient, referred to as a bid, knowing her own type but not the system state. All customers share the belief that the system is in steady state upon their arrival. This results in a non-cooperative game, and our goal is to characterise and analyse its Nash equilibrium. We explicitly compute the unique symmetric equilibrium when all customers have the same waiting costs, but not necessarily the same service distribution. It is further shown that both avoid the crowd (ATC) and follow the crowd (FTC) behaviours are possible for different bidding levels. We then proceed to a characterisation of the symmetric (within type classes) equilibrium in the general case. The equilibrium is pure and given by a solution to a system of non-linear equations that can be represented by two recursive formulas for the expected waiting times that need to be satisfied simultaneously. We show that ATC and FTC between class types are both possible for different bidding levels. The order of the bids in equilibrium is determined by the waiting time costs, and not the service moments. This may potentially yield very far from optimal results in terms of expected waiting times. We therefore suggest a simple service time based pricing mechanism in order to achieve a balance between the AP regime constraint and the socially optimal absolute priority regime.

The accumulating priority regime was introduced for the M/M/$1$ queue by Kleinrock in \cite{K1964}, with a more detailed analysis (for the M/G/$1$ queue) in the context of other priority regimes later appearing in p126 of \cite{book_K1976}. The main result was a recursive formula for the expected waiting times of the different priority classes. Note that it was then referred to as the time-dependent priority regime, but this has been used in other contexts over the years so we use instead the terminology of Stanford et al{.} in \cite{STZ2014}. The latter work presented a rigorous analysis of the waiting time distributions of the different classes via their Laplace transforms. A multi-server extension was studied by Sharif et al{.} in \cite{SSTZ2014} with the additional assumption of exponential service times. Several works provided extensions and generalizations of the mean value analysis for non-linear AP rates: power law \cite{KF1967}, affine \cite{G1977}, concave \cite{NA1979}, and negative linear \cite{K1982}. When departing from the linear model there are no longer any closed form solutions for the expected waiting times.

Analysis of the social optimization problem for absolute priorities can also be found in p135 of Kleinrock \cite{book_K1976}. Balachandran \cite{B1972} showed that ``stable", in a game equilibrium sense, purchasing of absolute priority is not socially optimal. Glazer and Hassin provided explicit analysis of this model in \cite{GH1986}. Haviv and van der Wal in \cite{HV1997} studied the game of purchasing priorities in an M/M/$1$ queue for two related regimes, namely relative priority and random order that are both determined by the customer bids. Mendelson and Whang suggested an incentive compatible pricing mechanism for the purchasing of absolute priorities in \cite{MW1990}. Hassin \cite{H1995} analysed the decentralized regulation of an absolute priority queue in which customers have the option of balking, on top of the purchasing of priorities. In \cite{AM2004}, Af{\`e}che and Mendelson presented a general model with dependent service and delay valuations, and analysed the optimal pricing mechanism for both preemptive and non-preemptive absolute priority regimes. Our work makes a contribution to this body of knowledge in the context of a dynamic priority queueing regime.

\section{Queueing model and preliminaries}\label{sec:model}
Suppose $N\in\mathbbm{N}_+$ types of customers arrive to a single server queue according to independent Poisson processes with rates $\lambda_i$ for $i=1,\ldots,N$, such that $\lambda:=\sum_{i=1}^N \lambda_i$. We assume that the service time of any type $i=1\ldots,N$ customer is a random variable $X_i\sim G_i$ (independent of all other service and arrival times), where $G_i$ is some general \textit{cdf}. Further denote the $k$'th moment of the service time by $\overline{x^k}_i:=\E X_i^k$ for $i=1,\ldots,N$. Customers are assigned an accumulating priority (AP) rate $b_i$ for every unit of waiting time according to their class type, $i=1,\ldots,N$. Without loss of generality we assume $b_1\leq b_2\leq \cdots\leq b_N$. The AP at time $t$ of a type $i$ customer who arrived at time $s$, $s \leq t$,  is $b_i(t-s)$. Upon a service completion, if the queue is not empty then the server admits the customer with the highest accumulated priority, and completes her service without preemption.  An example of the dynamics of the accumulating priority regime is illustrated in Figure \ref{fig:AP_example}. If $b_i=b, \ \forall i=1,\ldots,N$, then clearly the queue is a standard First-Come First-Served (\textit{FCFS}) M/G/$1$.

\begin{figure}[H]
\centering
\begin{tikzpicture}[xscale=2.8,yscale=1]
  \def\xmin{0}
  \def\xmax{2.5}
  \def\ymin{0}
  \def\ymax{2.5}
    \draw[->] (\xmin,\ymin) -- (\xmax,\ymin) node[right] {$t$} ;
    \draw[->] (\xmin,\ymin) -- (\xmin,\ymax) node[above] {AP} ;
    \foreach \x in {0,1,2}
    \node at (\x,\ymin) [below] {\x};
    \foreach \y in {1,2}
    \node at (\xmin,\y) [left] {\y};

    \draw[->,red, thick,domain=0:2.5]  plot (\x, {0.5*\x});
    \draw[->,blue,dashed, thick,domain=1:2.5]  plot (\x, {\x-1});
    \draw[smooth,black, thick,dotted]  (2,0) -- (2,1);
    \draw[smooth,black, thick,dotted]  (0,1) -- (2,1);

	\draw[] (2.4,0.95) node[right,red] {$b_1 t$};
	\draw[] (2.4,1.75) node[right,blue] {$b_2 t$};
	
\end{tikzpicture}
\caption{Example of the priority evolution for a type $1$ customer arriving at $t=0$ (solid red), and a type $2$ customer arriving at $t=1$ (dashed blue), with coefficients $b_1=0.5$ and $b_2=1$, respectively. At any time $t<2$ the type $1$ customer has higher priority than the type $2$ customer, but from $t=2$ and on she has lower priority and will effectively be overtaken.}\label{fig:AP_example}
\end{figure}
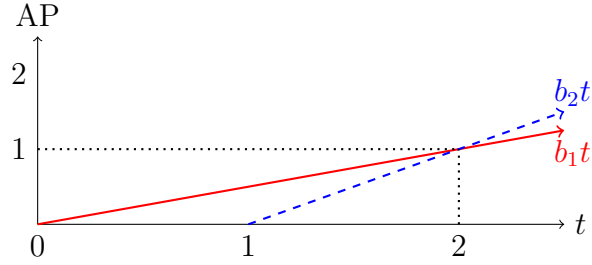

Let $W_i$ be the stationary waiting time of a type $i=1\ldots,N$ customer, excluding service, and denote $\W_i:=\E W_i$. We assume the stability condition $\rho:=\sum_{i=1}^N \rho_i<1$, where $\rho_i:=\lambda_i\overline{x}_i$. Hence, from p131 of Kleinrock \cite{book_K1976}, the expected waiting times satisfy the recursive formula,
\begin{equation}\label{eq:model_waiting_i}
\W_i=\frac{\frac{\W_0}{(1-\rho)}-\sum_{k=1}^{i-1}\rho_k(1-\frac{b_k}{b_i})\W_k}{1-\sum_{k=i}^N \rho_k(1-\frac{b_i}{b_k})}, \ 1 \leq i \leq N,
\end{equation}
where $\W_0=\sum_{i=1}^N\frac{\rho_i\overline{x^2}_i}{2\overline{x}_i}$ is the expected residual service of the customer in service. This formula allows for an easy computation of the expected waiting times given the AP rates $b_1,\ldots,b_N$.

In order to introduce economic analysis we assume that customers are sensitive to waiting in the queue. Specifically, customers of type $i=1,\ldots,N$ incur a cost of $C_i>0$ per unit of time in the queue (not including service). The cost parameter may be homogeneous for all customers or may depend on their type, and we will analyse both scenarios.

For the game analysis in the following sections it will be useful to consider a straightforward generalization of \eqref{eq:model_waiting_i}. Specifically, the case where every arriving type $i=1,\ldots,N$ customer is independently assigned a priority parameter according to a r.v. $B_i\sim F_i$. To avoid technicalities we make a standard assumption on the priority  rate distribution. We then proceed to analyse the general properties of the expected waiting times, namely continuity and convexity with respect to a single AP coefficient. We denote the expected waiting time of a customer with realised AP rate $a$, given the profile of distributions $\mathcal{F}=\{F_i, \ i=1,\ldots,N\}$, by $\W(a;\mathcal{F})$. Take note that this expectation is not dependent on the type of the customer.

\begin{assumption}\label{assumption:F}
The priorities are strictly positive, $F(0)=0$, and $F$ is comprised of a countable collection of atoms and continuous intervals with a positive density.
\end{assumption}

\begin{lemma}\label{lemma:W_general}
Let $\mathcal{F}=\{F_i, \ i=1,\ldots,N\}$ such that $F_i$ satisfies Assumption \ref{assumption:F} for all $i=1,\ldots,N$.
\begin{enumerate}
\item The expected waiting time of a customer with an AP rate of a satisfies:
\begin{equation}\label{eq:model_waiting_F}
\W(a;\mathcal{F})=\frac{\frac{\W_0}{1-\rho}-\sum_{i=1}^N\int_0^{a} \rho_i\W(b;\mathcal{F})\left(1-\frac{b}{a}\right)\ dF_i(b)}{1-\sum_{i=1}^N\int_a^\infty \rho_i\left(1-\frac{a}{b}\right)\ dF_i(b)}.
\end{equation}
\item $\W(a;\mathcal{F})$ and $\frac{d}{da}\W(a;\mathcal{F})$ are continuous with respect to $a$.
\item $\W(a;\mathcal{F})$ is decreasing and strictly convex with respect to $a$.
\end{enumerate}
\end{lemma}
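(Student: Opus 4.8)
The plan is to take \eqref{eq:model_waiting_F} as the defining identity for $\W(a;\mathcal{F})$ and extract every analytic property from it, after first establishing the identity itself and an a priori bound.

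\textbf{Part 1.} I would obtain \eqref{eq:model_waiting_F} as the continuum analogue of Kleinrock's recursion \eqref{eq:model_waiting_i} by a discretisation/limit argument. Approximate each $F_i$ by purely atomic distributions $F_i^{(n)}$ converging weakly to $F_i$, so that the population splits into finitely many priority classes indexed by the distinct rate values, with class traffic intensities $\rho_i\,\Delta F_i^{(n)}$. Formula \eqref{eq:model_waiting_i} applies verbatim; rearranging it so that $\W_i$ appears on both sides exhibits the sums over $\{k<i\}$ and $\{k\ge i\}$ as Riemann--Stieltjes sums, whose limits are the integrals $\int_0^a$ and $\int_a^\infty$ of \eqref{eq:model_waiting_F}, with $(1-b_k/b_i)$ and $(1-b_i/b_k)$ passing to $(1-b/a)$ and $(1-a/b)$. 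Convergence of the Stieltjes sums only requires that $\W$ be bounded, which holds on probabilistic grounds: any customer waits no longer than one whose priority never accumulates (rate $0$), whose expected wait is finite under $\rho<1$.

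\textbf{Part 2 (continuity).} Write $\W(a;\mathcal{F})=\mathcal{N}(a)/\mathcal{D}(a)$ with
\[ \mathcal{N}(a)=\frac{\W_0}{1-\rho}-\sum_{i=1}^N\int_0^a \rho_i\W(b;\mathcal{F})\Big(1-\tfrac{b}{a}\Big)\,dF_i(b),\qquad \mathcal{D}(a)=1-\sum_{i=1}^N\int_a^\infty \rho_i\Big(1-\tfrac{a}{b}\Big)\,dF_i(b). \]
Since $0\le 1-a/b\le 1$ on $\{b\ge a\}$, the subtracted sum in $\mathcal{D}$ is at most $\rho$, so $\mathcal{D}(a)\ge 1-\rho>0$ for all $a$. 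Continuity of $\mathcal{N}$ and $\mathcal{D}$ follows because, as $a$ crosses an atom of $F_i$ at $b_0$, the term entering $\mathcal{N}$ (leaving $\mathcal{D}$) carries the factor $(1-b_0/a)$ (resp. $(1-a/b_0)$), which vanishes at $a=b_0$; atoms therefore contribute continuously, and $\W=\mathcal{N}/\mathcal{D}$ is continuous.

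\textbf{Part 2 ($C^1$).} For differentiability I would apply the Leibniz rule: the moving-boundary contributions are the integrands at $b=a$, namely $\rho_i\W(a;\mathcal{F})(1-a/a)$ and $\rho_i(1-a/a)$, both zero, leaving
\[ \mathcal{N}'(a)=-\frac{1}{a^2}\sum_{i}\int_0^a \rho_i\, b\,\W(b;\mathcal{F})\,dF_i(b)\le 0,\qquad \mathcal{D}'(a)=\sum_{i}\int_a^\infty \frac{\rho_i}{b}\,dF_i(b)\ge 0. \]
The main obstacle is that at an atom $b_0$ each of $\mathcal{N}'$ and $\mathcal{D}'$ has a genuine corner, so $C^1$ regularity is not automatic. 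I would resolve this by computing the jump of $\W'=(\mathcal{N}'\mathcal{D}-\mathcal{N}\mathcal{D}')/\mathcal{D}^2$ across $b_0$: the one-sided jumps of $\mathcal{N}'$ and $\mathcal{D}'$ are $-\rho_i p\,\W(b_0;\mathcal{F})/b_0$ and $-\rho_i p/b_0$ (with $p$ the atom mass), so the jump in the numerator is proportional to $\W(b_0;\mathcal{F})\mathcal{D}(b_0)-\mathcal{N}(b_0)$, which is exactly zero by the identity $\W\mathcal{D}=\mathcal{N}$. Hence $\W'$ is continuous through every atom, and manifestly continuous between atoms, which proves the claim.

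\textbf{Part 3.} Monotonicity is immediate from $\W'(a)=(\mathcal{N}'(a)-\W(a;\mathcal{F})\mathcal{D}'(a))/\mathcal{D}(a)$: since $\mathcal{N}'\le 0$, $\W\ge 0$, $\mathcal{D}'\ge 0$ and $\mathcal{D}>0$, we get $\W'\le 0$, strictly wherever there is traffic above or below $a$. For convexity I would differentiate $\W\mathcal{D}=\mathcal{N}$ twice. On any interval where the $F_i$ admit densities, writing $\rho(a)=\sum_i\rho_i f_i(a)$ one finds $\mathcal{D}''(a)=-\rho(a)/a$ and $\mathcal{N}''(a)=-\W(a;\mathcal{F})\rho(a)/a-\tfrac{2}{a}\mathcal{N}'(a)$; substituting into $\W''\mathcal{D}=\mathcal{N}''-2\W'\mathcal{D}'-\W\mathcal{D}''$ the two terms in $\rho(a)$ cancel, leaving
\[ \W''(a)\,\mathcal{D}(a)=-\frac{2}{a}\,\mathcal{N}'(a)-2\,\W'(a)\,\mathcal{D}'(a), \]
where both summands are nonnegative because $\mathcal{N}'\le 0$, $\W'\le 0$ and $\mathcal{D}'\ge 0$. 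Thus $\W''\ge 0$ on every inter-atom interval, and since $\W'$ is continuous (Part 2) it is non-decreasing on all of $(0,\infty)$; strict positivity of the right-hand side on the interior of the support upgrades this to strict convexity.
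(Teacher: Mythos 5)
Your Parts 2 and 3 are correct and, at bottom, run on the same mechanism as the paper's proof: continuity of $\frac{d}{da}\W$ across an atom comes from the cancellation $\W\mathcal{D}=\mathcal{N}$ (the paper phrases this as the atom's term $\W(\tilde a;\mathcal{F})\frac{1}{\tilde a}\,dF(\tilde a)$ migrating between the two integrals inside its function $K(a)$, which is exactly the numerator of your $\W'$), and your convexity computation --- differentiating $\W\mathcal{D}=\mathcal{N}$ twice and watching the density terms cancel --- is a somewhat cleaner variant of the paper's direct sign analysis of $K'(a)$ and $J'(a)$. One small repair: you hedge strict convexity to ``the interior of the support'', but the lemma asserts it for all $a>0$, and your own identity delivers this: if some priority mass lies below $a$ then $-\frac{2}{a}\mathcal{N}'(a)>0$, while if all mass lies above $a$ then $\mathcal{D}'(a)>0$ and $\W'(a)<0$, so $-2\W'(a)\mathcal{D}'(a)>0$; either way $\W''(a)>0$.

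The genuine gap is in Part 1. The claim that ``convergence of the Stieltjes sums only requires that $\W$ be bounded'' is wrong as stated, because these are not Stieltjes sums of a fixed known function: the summands contain the waiting times $\W^{(n)}_k$ of the $n$-th discretised system, which vary with $n$ and are defined only implicitly through the recursion \eqref{eq:model_waiting_i}. To push the limit through you need three further ingredients, none automatic: (i) convergence (at least subsequentially, via a compactness/equicontinuity or coupling argument) of the functions $\W^{(n)}$ to some limit; (ii) that any such limit satisfies \eqref{eq:model_waiting_F}, together with uniqueness of solutions to this fixed-point equation, so that the limit is identified; and (iii) --- the step most easily overlooked --- that the expected waiting time of the actual system with rates drawn from the $F_i$ is the limit of the discretised systems' waiting times, i.e.\ continuity of the queueing functional under weak convergence of the priority-rate distributions. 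The paper sidesteps all three by deriving \eqref{eq:model_waiting_F} directly: a tagged-customer decomposition $\W(a;\mathcal{F})=\W_0+\sum_{i}\overline{x}_i(\M_i(a)+\N_i(a))$ into expected overtakers and expected non-overtaken queue occupants, justified by verifying the Wald-identity condition for the overtaking counts (using non-preemption), then Poisson splitting and superposition to compute $\M_i(a)$ and $\N_i(a)$, and finally work conservation to eliminate the terms $\int_a^\infty\W(b;\mathcal{F})\,dF_i(b)$ and reach the stated form. Either supply the missing lemmas (ii)--(iii) for your discretisation route, or replace Part 1 with a direct derivation of this kind.
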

We leave the proof of this generalization for the appendix. Observe that the original discrete formulation of \eqref{eq:model_waiting_i} is a special case, where $F_i(a)=\mathrm{1}_{\{a\geq b_i\}}, \ \forall i=1,\ldots,N$. The importance of Lemma \ref{lemma:W_general} is that if customers have linear waiting costs, then the cost function they wish to minimize (given~$\mathcal{F}$) is strictly convex with well defined first-order conditions with respect to their individual AP rate, denoted here by $a$. Note that the lemma and the above conclusion also hold for a continuum of customer types with any joint distribution of service and AP rates.

\section{Purchasing priorities}\label{sec:pp}
The question now is what if customers have to pay for their AP rates instead of having them exogenously assigned? To answer this question we formulate a game in which customers with linear waiting sensitivity can purchase priorities. We assume that there are $N$ customer types with linear waiting costs per unit of time, $C_1<C_2<\cdots<C_N$. An individual customer's action is a bid $b>0$ for priority rate given her type. Apart from the knowledge of one's type we assume all customers share the distributional belief of the process being in steady state. A mixed strategy is a probability distribution on the non-negative real values. A symmetric (within classes) strategy profile is a set of distributions, $\mathcal{F}=\{F_i,\  i=1,\ldots,N\}$, specifying a single bidding strategy for every type of customer. The cost incurred by a type $i$ customer paying $b$ given that all other customers pay according to $\mathcal{F}$ is
\begin{equation}\label{pp:cost}
c_i(b;\mathcal{F}):=C_i\W(b;\mathcal{F})+b, \quad i=1,\ldots,N.
\end{equation}

A symmetric strategy profile $\mathcal{F}$ is a Nash equilibrium if $F_i$ is a best response to $\mathcal{F}$ for any customer of type $i=1,\ldots,N$. We will next assert that an equilibrium exists, and further that any equilibrium is pure. The existence is verified using the conditions of Kakutani's Fixed-Point Theorem. The fact that the equilibrium is in pure strategies is due to the the strict convexity of the waiting cost with respect to a customer's bid.

According to Lemma \ref{lemma:W_general}, the waiting time of any customer is decreasing and strictly convex with her own AP rate. This means that the cost function given by \eqref{pp:cost} is strictly convex, and therefore has a unique minimizer $b>0$ for any strategy profile $\mathcal{F}$ of the other customers. In particular, any equilibrium point is pure as by definition as all customers are playing a cost minimizing strategy. Thus, from now on we focus on pure strategy profiles given by $\mathbf{b}=(b_1,\ldots,b_N)\in\mathbbm{R}^N$, where $b_i$ is a deterministic action for $i=1,\ldots,N$. Let $r_i(\mathbf{b})$ be the best response of a customer of type $i$ to strategy profile $\mathbf{b}$.

\begin{definition}\label{def:NE}
A pure strategy profile $\mathbf{b}^e$ is a symmetric Nash equilibrium if
\[
b_i^e=r_i(\mathbf{b}^e),\quad \forall \ i=1,\ldots,N.
\]
\end{definition}

The game is non-atomic in the sense that the action of a single customer has no impact on the steady-state properties of the system. This implies that all individual customers of the same type face the same cost function, and therefore also have the same unique best response. Hence, any equilibrium is symmetric (within type classes) and we do not lose generality by considering only symmetric strategy profiles. Also, note that no customer will bid higher than
\[
\overline{b}_i:=C_i\frac{\lambda\overline{x^2}+2(1-\rho)\overline{x}}{2(1-\rho)^2}, \ i=1,\ldots,N,
\]
which is the waiting cost incurred by customer with cost $C_i$ with absolute zero priority (see \cite{book_H2013} pp. 60). We can therefore limit the actions of the customers to the compact and convex action space given by the Cartesian product,
\[
\mathcal{B}:=\prod_{i=1}^N[0,\overline{b}_i]\subset \mathbbm{R}^N.
\]

As we have already stated, the best response function $\mathbf{r}(\mathbf{b}):=\{r_i(\mathbf{b}),\ i=1,\ldots,N\}$ returns a unique value for each coordinate, hence $\mathbf{r}(\mathbf{b})\in\mathbbm{R}^N$ is trivially a convex set for any $\mathbf{b}$. Finally, $\mathbf{r}$ also has a closed graph. This can be argued in two steps: first of all, the convexity of the cost function (part 3 of Lemma \ref{lemma:W_general}) implies that the best response function is given by the first-order conditions, secondly the first-order conditions are continuous (part 2 of Lemma \ref{lemma:W_general}). To summarize:
\begin{itemize}
\item $\mathcal{B}\subset\mathbbm{R}^N$ is a compact and closed set.
\item $\mathbf{r}(\mathbf{b})$ is a convex set for any $\mathbf{b}\in\mathcal{B}$.
\item $\mathbf{r}:\mathcal{B}\to\mathcal{B}$ has a closed graph.
\end{itemize}
 
We have thus established that the existence conditions of Kakutani's Fixed-Point Theorem (see \cite{book_OR1994}), yielding the following general existence Lemma for our model.

\begin{lemma}\label{lemma:pp_existence}
There exists a (within class) symmetric pure strategy equilibrium.
\end{lemma}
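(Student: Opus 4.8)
The plan is to invoke Kakutani's Fixed-Point Theorem for the best-response correspondence $\mathbf{r}:\mathcal{B}\to\mathcal{B}$, following the line already sketched in the paragraphs preceding the statement. Kakutani's theorem guarantees a fixed point of $\mathbf{r}$ — and hence, by Definition~\ref{def:NE}, a symmetric pure Nash equilibrium $\mathbf{b}^e=\mathbf{r}(\mathbf{b}^e)$ — provided three conditions hold: (i) $\mathcal{B}$ is a non-empty, compact, convex subset of Euclidean space; (ii) for every $\mathbf{b}\in\mathcal{B}$ the image $\mathbf{r}(\mathbf{b})$ is a non-empty convex subset of $\mathcal{B}$; and (iii) $\mathbf{r}$ has a closed graph. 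I would verify these in turn, drawing the analytic input from Lemma~\ref{lemma:W_general}.

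First I would dispose of (i) and (ii). The set $\mathcal{B}=\prod_{i=1}^N[0,\overline{b}_i]$ is a finite product of compact intervals, hence compact and convex, and it is non-empty since each $\overline{b}_i>0$. For (ii), fix an opposing profile $\mathbf{b}$, which induces the belief $\mathcal{F}=\{F_i\}$ with $F_i(\cdot)=\mathbbm{1}_{\{\cdot\ge b_i\}}$, and consider a type-$i$ customer minimizing $c_i(b;\mathcal{F})=C_i\W(b;\mathcal{F})+b$ over $b$. By part~3 of Lemma~\ref{lemma:W_general} the map $b\mapsto\W(b;\mathcal{F})$ is strictly convex, so $c_i(\cdot;\mathcal{F})$ is strictly convex and admits a \emph{unique} minimizer; thus $r_i(\mathbf{b})$ is a singleton, which is trivially convex and non-empty. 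That this minimizer lies in $[0,\overline{b}_i]$ is the domination argument already indicated: any bid $b>\overline{b}_i$ incurs a cost exceeding $\overline{b}_i$, which already upper-bounds the waiting cost $C_i\W(0;\mathcal{F})$ of bidding nothing, so such a bid cannot be optimal. Hence $\mathbf{r}(\mathbf{b})\in\mathcal{B}$.

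The substance of the proof is condition (iii). Since $\mathbf{r}$ is single-valued on the compact set $\mathcal{B}$, having a closed graph is equivalent to $\mathbf{r}$ being continuous, and I would establish this via Berge's Maximum Theorem. The constraint correspondence $\mathbf{b}\mapsto[0,\overline{b}_i]$ is constant, hence continuous, so the only hypothesis requiring work is \emph{joint} continuity of the objective $(b,\mathbf{b})\mapsto c_i(b;\mathcal{F})$ in the own bid $b$ and the opposing profile $\mathbf{b}$ — and this is the step I expect to be the main obstacle. Continuity of $\W(\cdot;\mathcal{F})$ in the own argument is granted by part~2 of Lemma~\ref{lemma:W_general}, but as $\mathbf{b}$ varies the belief $\mathcal{F}$ moves its atoms, so one must also check that $\W(b;\mathcal{F})$ varies continuously in $\mathbf{b}$. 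I would verify this directly from the recursion~\eqref{eq:model_waiting_F}, whose integrals depend continuously on the atom locations $b_k$. Strict convexity then pins the argmax to a single point that, by Berge, moves continuously with $\mathbf{b}$, giving the closed graph. Equivalently, one can argue through the first-order condition $C_i\,\frac{d}{db}\W(b;\mathcal{F})+1=0$, which by parts~2--3 of Lemma~\ref{lemma:W_general} defines $r_i(\mathbf{b})$ implicitly via a continuous, strictly monotone relation.

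With (i)--(iii) in hand, Kakutani's theorem \cite{book_OR1994} yields a fixed point $\mathbf{b}^e\in\mathcal{B}$ with $b_i^e=r_i(\mathbf{b}^e)$ for all $i$, which is by Definition~\ref{def:NE} a symmetric Nash equilibrium. It is pure because each $r_i(\mathbf{b}^e)$ is the unique minimizer of a strictly convex cost, so no mixed strategy can be a best response, exactly as noted before the definition.
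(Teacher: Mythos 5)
Your proposal is correct and takes essentially the same route as the paper: Kakutani's Fixed-Point Theorem applied to the best-response map $\mathbf{r}:\mathcal{B}\to\mathcal{B}$, with single-valuedness and convexity of $\mathbf{r}(\mathbf{b})$ coming from the strict convexity in part~3 of Lemma~\ref{lemma:W_general}, the closed graph from the continuity in part~2, and the bound $\overline{b}_i$ confining best responses to the compact convex box $\mathcal{B}$. If anything, you are more careful than the paper on the closed-graph step, where the paper cites only continuity of the first-order condition in the own bid $a$: you correctly observe that one also needs continuity of $\W(b;\mathcal{F})$ in the opposing profile $\mathbf{b}$ (whose variation moves the atoms of $\mathcal{F}$), and your sketch of verifying this from the recursion~\eqref{eq:model_waiting_F} together with Berge's Maximum Theorem fills a detail the paper leaves implicit.
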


\section{Homogeneous customers}\label{sec:pp_homogeneous}
We now assume that there is only a single customer type, i.e. $N=1$, the total arrival rate is $\lambda$, the service distribution is $G$, and all customers incur waiting cost $C$ per unit of time in the queue.
\begin{theorem}\label{thm:pp_homo_equil}
If $N=1$, then the unique pure Nash equilibrium bid is
\begin{equation}\label{eq:pp_homo_equil}
b^e=\frac{C\rho\W_0}{1-\rho}.
\end{equation}
In particular, the resulting queueing process is that of a \textit{FCFS} regime where all customers incur a cost of $C(1+\rho)\frac{\W_0}{1-\rho}$.
\end{theorem}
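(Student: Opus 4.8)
The plan is to reduce the equilibrium condition to a single first-order condition and solve it explicitly; existence and purity are already supplied by Lemma \ref{lemma:pp_existence} and the strict convexity of the cost, so only the identification and uniqueness of $b^e$ remain. Since there is a single type and the game is non-atomic, any equilibrium is one common bid $b^e$, and the profile faced by a tagged customer is the point mass $\mathcal{F}=\{F\}$ with $F(a)=\mathbbm{1}_{\{a\geq b^e\}}$. When everyone bids $b^e$ they accumulate priority at a common rate and never overtake one another, so the regime is exactly \textit{FCFS}; evaluating \eqref{eq:model_waiting_F} at $a=b^e$ (equivalently, invoking the \textit{FCFS} formula) gives $\W(b^e;\mathcal{F})=\W_0/(1-\rho)$. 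This already pins down the equilibrium queueing process and the waiting part of the cost.

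Next I would analyse a unilateral deviation to an arbitrary bid $a$, holding $\mathcal{F}$ fixed at the point mass. Substituting $F$ into \eqref{eq:model_waiting_F} collapses the integrals to single terms and splits into two regimes. For $a>b^e$ the denominator integral vanishes and $\W(a;\mathcal{F})=\frac{\W_0}{1-\rho}-\rho\,\W(b^e;\mathcal{F})\bigl(1-\tfrac{b^e}{a}\bigr)$, whereas for $a<b^e$ the numerator integral vanishes and $\W(a;\mathcal{F})=\frac{\W_0/(1-\rho)}{1-\rho\bigl(1-\tfrac{a}{b^e}\bigr)}$. Inserting $\W(b^e;\mathcal{F})=\W_0/(1-\rho)$ renders both branches explicit functions of $a$.

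I would then differentiate each branch and evaluate at $a=b^e$. By part 2 of Lemma \ref{lemma:W_general} the derivative $\frac{d}{da}\W(a;\mathcal{F})$ is continuous through the atom, and one checks directly that both one-sided derivatives equal $-\frac{\rho\W_0}{(1-\rho)b^e}$, so the derivative is single-valued at $b^e$. Part 3 of Lemma \ref{lemma:W_general} guarantees that the deviation cost $c(a)=C\W(a;\mathcal{F})+a$ is strictly convex, hence the unique best response is characterised by the stationarity condition $C\frac{d}{da}\W(a;\mathcal{F})\big|_{a=b^e}+1=0$. Equilibrium forces this minimiser to coincide with $b^e$, giving $\frac{C\rho\W_0}{(1-\rho)b^e}=1$, i.e. $b^e=\frac{C\rho\W_0}{1-\rho}$. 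Since this equation has a unique positive root, the equilibrium is unique; substituting back yields the equilibrium cost $C\W(b^e;\mathcal{F})+b^e=\frac{C\W_0}{1-\rho}+\frac{C\rho\W_0}{1-\rho}=C(1+\rho)\frac{\W_0}{1-\rho}$.

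The main obstacle is the bookkeeping at the atom $a=b^e$: one must verify that \eqref{eq:model_waiting_F} against the point-mass profile produces precisely the two branches above and, crucially, that their derivatives agree at $b^e$, so that the first-order condition is a genuine equation rather than a subdifferential inclusion forced by a kink. A secondary check is that the resulting $b^e$ lies in the interior of $[0,\overline{b}]$ (which follows from $\tfrac{1}{1-\rho}>\rho$), so that stationarity, and not a corner, determines the equilibrium.
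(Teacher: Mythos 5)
Your proposal is correct and follows essentially the same route as the paper's proof: you derive the two deviation branches of the waiting time (via the point-mass case of \eqref{eq:model_waiting_F}, where the paper equivalently applies \eqref{eq:model_waiting_i} with $N=2$ and a zero-rate tagged class), observe that both one-sided derivatives at the atom equal $-\frac{\rho\W_0}{(1-\rho)b^e}$ so the first-order condition is a genuine equation, and solve the fixed point $C\frac{\rho\W_0}{(1-\rho)b^e}=1$ to obtain $b^e=\frac{C\rho\W_0}{1-\rho}$ with the stated equilibrium cost. Your extra checks (interiority of $b^e$ in $[0,\overline{b}]$ and the explicit appeal to strict convexity for uniqueness of the best response) are sound and consistent with the paper's argument.
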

\begin{proof}
We commence by analysing the best response of a single customer when all other customers make the same bid $b$. If the customer bids $a$ then we can use \eqref{eq:model_waiting_i} for $N=2$, with arrival rates zero (for the tagged customer) and $\lambda$ (for all others), to obtain her expected waiting time,
\[
\W(a;b)=
\left\{
	\begin{array}{ll}
		\frac{\mathrm{W}_0}{1-\rho}\frac{b}{(1-\rho)b+\rho a} \mbox{, } &  a<b, \\
		\frac{\mathrm{W}_0}{1-\rho} \mbox{, } &  a=b,  \\
		\frac{\mathrm{W}_0}{1-\rho}\frac{(1-\rho)a+\rho b}{a} \mbox{, } &  a>b.
	\end{array}
\right.
\]

The derivative of the waiting time with respect to $a$ for $a\neq b$ is then
\begin{equation}\label{eq:pp_Deriv}
\frac{d}{da}\W(a;b)=
\left\{
	\begin{array}{ll}
		-\frac{\mathrm{W}_0}{1-\rho}\frac{\rho b}{((1-\rho)b+\rho a)^2} \mbox{, } &  a<b, \\
		-\frac{\mathrm{W}_0}{1-\rho}\frac{\rho b}{a^2} \mbox{, } &  a>b.
	\end{array}
\right.
\end{equation}

It is easy to verify directly that the second derivative is positive in both cases, but this is not necessary as convexity was generally established in Lemma \ref{lemma:W_general}. Furthermore, it was shown that the derivative is continuous, hence by~\eqref{eq:pp_Deriv} we have that
\begin{equation}\label{eq:pp_Deriv_fixed}
\lim_{a\uparrow b}\frac{d}{da}\W(a;b)=\lim_{a\downarrow b}\frac{d}{da}\W(a;b)=-\frac{\rho\mathrm{W}_0}{(1-\rho)b}.
\end{equation}

The cost minimizing bid is given by the first-order condition, $\frac{d}{da}c(a;b)=0$, where $c(a;b):=C\W(a;b)+a$. The best response, $r(b)=\{a:\frac{d}{da}c(a;b)=0\}$, can be obtained for $a<b$ and $a>b$ by solving the respective constrained quadratic equations, however it is not necessary for finding the equilibrium value of $b$. We are interested in the fixed point $b^e=r(b^e)$, and from \eqref{eq:pp_Deriv_fixed} we get the equilibrium bid in \eqref{eq:pp_homo_equil}. Therefore, all customers bidding $b^e$ and incurring a cost of $C(1+\rho)\frac{\W_0}{1-\rho}$ is the unique equilibrium of the game.
\qed
\end{proof}

\begin{remark}
The equilibrium remains unchanged even if customers are heterogeneous in their service time distributions. That is, the general model with $\lambda_i$ and $G_i$ for $i=1,\ldots,N$, and homogeneous costs $C_i=C$. This is due to the non-preemptive regime which implies that the waiting time in the queue is not affected by one's own service time. 
\end{remark}

We next proceed to highlight an interesting feature of this game, namely the non-monotone behaviour of the best-response function, $r(b):=\argmin_{a>0}c(a;b)$. The game is said to satisfy the avoid the crowd (ATC) property or follow the crowd (FTC) property if $r(b)$ is monotone decreasing or increasing, respectively (see p6 of \cite{book_HH2003}). We show that our game is neither ATC or FTC, but rather displays monotone increasing behaviour for $b\leq\tilde{b}$ and monotone decreasing behaviour for $b>\tilde{b}$ where $\tilde{b}:=b^e\max\left\lbrace 1,\frac{1}{4(1-\rho)^2}\right\rbrace$. We formally characterise the best response function and ATC/FTC behaviour in the following proposition. We assume that $b>0$ and later comment on the case $b=0$.

\begin{proposition}\label{prop:pp_BR}
If all customers bid $b>0$ then
\begin{equation}\label{eq:pp_BR}
r(b)=
\left\{
	\begin{array}{ll}
		\sqrt{b^e b} \mbox{, } &  0<b<b^e, \\
		\frac{\sqrt{b^e b}-(1-\rho)b}{\rho} \mbox{, } &  b^e\leq b<\frac{b^e}{(1-\rho)^2}, \\
		0 \mbox{, } & b\geq \frac{b^e}{(1-\rho)^2} .
	\end{array}
\right.
\end{equation}
Moreover, if $\rho\leq \frac{1}{2}$ then
\begin{itemize}
\item $r(b)$ is monotone increasing (FTC)
for $0 < b < b^e$,
\item $r(b)$ is monotone decreasing (ATC) for $b^e < b < b^e/(1-\rho)^2$,
\end{itemize}

and if $\rho>\frac{1}{2}$, then
\begin{itemize}
\item $r(b)$ is monotone increasing (FTC)
for $0 < b < b^e/(4(1-\rho)^2)$,
\item $r(b)$ is monotone decreasing (ATC) for $b^e/(4(1-\rho)^2) < b < b^e/(1-\rho)^2$.
\end{itemize}
Observe that $b^e<b^e/(4(1-\rho)^2)$. Finally, $r(b)=0$ for $b>b^e/(1-\rho)^2$.
\end{proposition}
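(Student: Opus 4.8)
The plan is to obtain $r(b)$ directly from the stationarity condition $\frac{d}{da}c(a;b)=0$, exploiting the strict convexity of $c(\cdot;b)$ established in Lemma \ref{lemma:W_general} (so that any stationary point is automatically the unique global minimiser), and then to read off monotonicity by differentiating the resulting closed form. Writing $c(a;b)=C\W(a;b)+a$, the first-order condition is $\frac{d}{da}\W(a;b)=-1/C$. I would substitute the two branches of \eqref{eq:pp_Deriv}, each time using $\frac{C\rho\W_0}{1-\rho}=b^e$ to simplify. On the branch $a>b$ this yields $a^2=b^e b$, i.e.\ $a=\sqrt{b^e b}$; on the branch $a<b$ it yields $((1-\rho)b+\rho a)^2=b^e b$, i.e.\ $a=\frac{\sqrt{b^e b}-(1-\rho)b}{\rho}$.

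The key bookkeeping step is to decide which branch actually contains the minimiser for each $b$. Since the derivative is continuous across the kink (equation \eqref{eq:pp_Deriv_fixed}), evaluating $\frac{d}{da}c$ at $a=b$ gives $1-b^e/b$. I would argue: if $b<b^e$ this is negative, so the cost is still decreasing at $a=b$ and the minimiser lies in $a>b$, giving $r(b)=\sqrt{b^e b}$; if $b>b^e$ it is positive, so the minimiser lies in $a<b$, giving the second formula; the two expressions coincide at $b=b^e$, recovering the fixed point $r(b^e)=b^e$. One then verifies the self-consistency inequalities $\sqrt{b^e b}>b\iff b<b^e$ and $\frac{\sqrt{b^e b}-(1-\rho)b}{\rho}<b\iff b>b^e$, confirming that each candidate indeed falls in its own region. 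For the third piece I would note that the $a<b$ stationary point becomes nonpositive exactly when $\sqrt{b^e b}\le(1-\rho)b$, i.e.\ $b\ge b^e/(1-\rho)^2$; since $c(\cdot;b)$ is convex with unconstrained stationary point at $a\le 0$, it is increasing on $(0,\infty)$ and the best response collapses to the boundary $r(b)=0$. This establishes \eqref{eq:pp_BR}.

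For the ATC/FTC dichotomy I would differentiate each piece. On $(0,b^e)$, $r(b)=\sqrt{b^e}\sqrt{b}$ is manifestly increasing (FTC). On $(b^e,b^e/(1-\rho)^2)$, $r'(b)=\frac{1}{\rho}\bigl(\tfrac{1}{2}\sqrt{b^e/b}-(1-\rho)\bigr)$, which vanishes at $b=\frac{b^e}{4(1-\rho)^2}$ and is positive (resp.\ negative) for $b$ below (resp.\ above) this value. The whole argument then reduces to locating $\frac{b^e}{4(1-\rho)^2}$ relative to the interval endpoints: comparing with $b^e$ shows $\frac{b^e}{4(1-\rho)^2}\gtrless b^e$ according as $\rho\gtrless \tfrac12$, and it always lies below $b^e/(1-\rho)^2$. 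In the case $\rho\le\tfrac12$ the critical point sits at or below $b^e$, so the entire middle branch is decreasing (ATC) while the first branch is increasing (FTC), yielding the threshold $\tilde b=b^e$; in the case $\rho>\tfrac12$ the critical point lies strictly inside the middle branch, splitting it, so that $r$ is FTC on $(0,\frac{b^e}{4(1-\rho)^2})$ and ATC on $(\frac{b^e}{4(1-\rho)^2},\frac{b^e}{(1-\rho)^2})$.

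The main obstacle I anticipate is not any single computation but the branch bookkeeping: making sure that for each $b$ exactly one stationary point is admissible and lies in the correct region, and that the pieces glue continuously at $b=b^e$ and $b=b^e/(1-\rho)^2$. Convexity from Lemma \ref{lemma:W_general} does the heavy lifting here, since it guarantees a unique minimiser and lets me identify the relevant branch purely from the sign of the derivative at the kink, rather than by comparing cost values across regions.
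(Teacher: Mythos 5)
Your proposal is correct and follows essentially the same route as the paper: solve the first-order condition $\frac{d}{da}c(a;b)=0$ on the two branches of \eqref{eq:pp_Deriv}, use the strict convexity from Lemma \ref{lemma:W_general} to guarantee a unique admissible stationary point, and then differentiate the resulting closed form of $r(b)$ to locate the FTC/ATC switch at $b^e/(4(1-\rho)^2)$. Your device of deciding the active branch from the sign of the kink derivative $1-b^e/b$ at $a=b$ (via \eqref{eq:pp_Deriv_fixed}) is a clean way to make explicit the branch bookkeeping that the paper compresses into ``simple algebra,'' but it is the same argument in substance.
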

\begin{proof}
If all customers make bid $b>0$ then from Lemma \ref{lemma:W_general} we have that there is a unique solution to the first-order condition $\frac{d}{da}c(a;b)=0$. In \eqref{eq:pp_Deriv} we saw that the condition is different for $a<b$ and $a>b$:
\[
\begin{array}{ccc}
1 = \frac{b^eb}{((1-\rho)b+\rho a)^2} & \mbox{,} & a<b, \\
1 = \frac{b^eb}{a^2} & \mbox{,} & a>b.
\end{array}
\]
If $b<b^e$ then the second equation has a solution $a$ such that $b<a<b^e$. By the convexity property we have that there is only one solution, and therefore if there is a solution to the second equation then the first equation does not have a solution, and vice versa. If $b>b^e$ then there is no solution to the second equation and the best response is either the solution to the first equation or zero. Simple algebra yields \eqref{eq:pp_BR}. By taking derivative of $r(b)$ w.r.t. $b$ and comparing to zero we get the remainder of the proposition.
\qed
\end{proof}

Figure \ref{fig:pp_BR}  illustrates the shape of the best response function for two numerical examples. In the left-hand side example we set $\rho=\frac{1}{2}$ and the FTC changes to ATC at $b^e$ exactly. In the right-hand side example we set $\rho>\frac{1}{2}$ and the change occurs after $b^e$. Queueing games with FTC behaviour often have multiple pure equilibria, but we have shown that for our model the equilibrium is nevertheless unique. The fact that at some point the best response function becomes ATC is one reason for this. An additional feature of this model is that there is a discontinuity at zero, which prevents the strategy of all bidding zero from being an equilibrium point. If all customers bid zero then a singled out user will always want to bid some, as small as possible, $\epsilon>0$ and get absolute priority. Note that all bidding zero is socially optimal since the customers are homogeneous. This implies that our model displays the common phenomenon of rent-dissipation (see p85 of \cite{book_HH2003}).

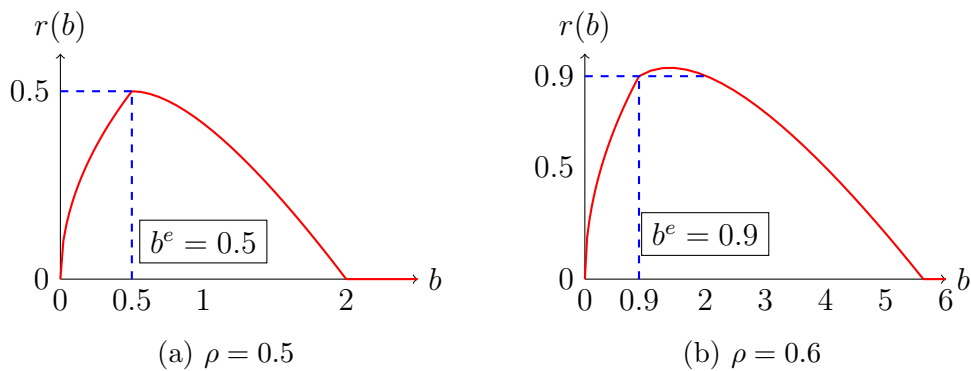
\begin{figure}[H]
\centering
\begin{subfigure}{.45\linewidth}
\centering
\begin{tikzpicture}[xscale=1.9,yscale=5]
  \def\xmin{0}
  \def\xmax{2.5}
  \def\ymin{0}
  \def\ymax{0.6}
    \draw[->] (\xmin,\ymin) -- (\xmax,\ymin) node[right] {$b$} ;
    \draw[->] (\xmin,\ymin) -- (\xmin,\ymax) node[above] {$r(b)$} ;
    \foreach \x in {0,0.5,1,2}
    \node at (\x,\ymin) [below] {\x};
    \foreach \y in {0,0.5}
    \node at (\xmin,\y) [left] {\y};
    \draw[red, thick, domain=0:0.5] plot (\x, {sqrt(0.5*\x)});
    \draw[red, thick, domain=0.5:2] plot (\x, {(sqrt(0.5*\x)-0.5*\x)/0.5});
    \draw[red, thick, domain=2:\xmax] plot (\x, {0});
    \draw[blue, thick, dashed] (0.5,\ymin)--(0.5,0.5);
    \draw[blue, thick, dashed] (\xmin,0.5)--(.5,0.5);
    \node[draw] at (1,0.1) {$b^e=0.5$};
\end{tikzpicture}
\caption{$\rho=0.5$}\label{fig:pp_BR_a}
\end{subfigure}
\begin{subfigure}{.45\linewidth}
\centering
\begin{tikzpicture}[xscale=0.8,yscale=3]
  \def\xmin{0}
  \def\xmax{6}
  \def\ymin{0}
  \def\ymax{1}
    \draw[->] (\xmin,\ymin) -- (\xmax,\ymin) node[right] {$b$} ;
    \draw[->] (\xmin,\ymin) -- (\xmin,\ymax) node[above] {$r(b)$} ;
    \foreach \x in {0,0.9,2,3,4,5,6}
    \node at (\x,\ymin) [below] {\x};
    \foreach \y in {0,0.5,0.9}
    \node at (\xmin,\y) [left] {\y};
    \draw[red, thick, domain=0:0.9] plot (\x, {sqrt(0.9*\x)});
    \draw[red, thick, domain=0.9:5.625] plot (\x, {(sqrt(0.9*\x)-0.4*\x)/0.6});
    \draw[red, thick, domain=5.625:\xmax] plot (\x, {0});
    \draw[blue, thick, dashed] (0.9,\ymin)--(0.9,0.9);
    \draw[blue, thick, dashed] (\xmin,0.9)--(2.1,0.9);
    \node[draw] at (2,0.2) {$b^e=0.9$};
\end{tikzpicture}
\caption{$\rho=0.6$}\label{fig:pp_BR_b}
\end{subfigure}
\caption{Best response functions for parameters $C=1$, $\overline{x}=1$, $\overline{x^2}=2$, and two different levels of $\lambda$.}
\label{fig:pp_BR}
\end{figure}

It is interesting to compare the equilibrium result we have obtained to those of Haviv and van der Wal \cite{HV1997}. They also assumed that customers incur a cost of $C$ per unit of waiting time. Two service regimes were examined: processor sharing of capacity proportional to the purchased priorities and random admittance with probabilities also proportional to the purchased priorities. In both cases they showed that the Nash equilibrium is a pure strategy where all customers purchase the same priority, with bids $\frac{C\W_0}{\mu(1-\rho)(2-\rho)}$ and $\frac{C\rho \W_0}{\mu(1-\rho)(2-\rho)}$, respectively. Note that effectively, i.e. in equilibrium, no customer has any priority over the others in both cases . In particular, the first model results in the \textit{FCFS} regime while the second in the Egalitarian Processer Sharing (EPS) regime. We can conclude that the equilibrium bid under the AP regime, $\frac{C\rho \W_0}{\mu(1-\rho)}$, is higher than that of the relative priority regime, but may  be either higher or lower than under the PS regime, depending on the value of $\rho$.

\section{Heterogeneous customers}\label{sec:pp_heterogeneous}
We now return to the general model in which there are $N$ types of customers with arrival rates $\lambda_i$ and service distributions $G_i$, for every $i=1,\ldots,N$. As before, we denote the expected waiting time of a customer bidding $a$ while all others bid according to $\mathbf{b}=(b_1,\ldots,b_N)$ (i.e., each class has a respective bid) by $\W(a;\mathbf{b})$. A customer of type $i$ has a waiting cost of $C_i$, and wishes to minimize $C_i\W(a;\mathbf{b})+a$, where $\mathbf{b}$ is the profile being used by all other customers. We seek the Nash equilibrium strategy profile $\mathbf{b}^e:=(b_1^e,\ldots,b_N^e)$.

Recall that according to Lemma \ref{lemma:W_general} there is a unique best response for any customer to the strategies of other customers. Furthermore, this best response is identical for all non-atomic customers of the same type. This implies that any equilibrium profile is pure and symmetric within the classes of types, as was discussed in Section \ref{sec:pp}. The expected waiting time of a customer bidding $a$ when all other users bid according to profile $\mathbf{b}$, regardless of her type, is given by \eqref{eq:model_waiting_i}:
\begin{equation}\label{eq:pp_hetero_waiting}
\W(a,\mathbf{b})=\frac{\frac{\W_0}{(1-\rho)}-\sum_{\{k:b_k<a\}}\rho_k(1-\frac{b_k}{a})\W(b_k,\mathbf{b})}{1-\sum_{\{k:b_k\geq a\}} \rho_k(1-\frac{a}{b_k})},
\end{equation}

\begin{lemma}\label{lemma:b_order}
If $C_1<C_2<\cdots<C_N$, then the equilibrium bids are ordered:
\[
b_1^e<b_2^e<\cdots<b_N^e.
\]
\end{lemma}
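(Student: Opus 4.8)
The plan is to characterise each equilibrium bid through its first-order condition and then exploit the type-independence of the waiting time together with the strict convexity established in Lemma~\ref{lemma:W_general}. The starting observation is that the map $a \mapsto \W(a;\mathbf{b})$ does not depend on the tagged customer's type at all, so the individual cost $c_i(a;\mathbf{b})=C_i\W(a;\mathbf{b})+a$ depends on the type only through the multiplicative constant $C_i$. Since $\mathbf{b}^e=(b_1^e,\ldots,b_N^e)$ is a pure profile, the corresponding $\mathcal{F}$ consists of atoms, which is permitted by Assumption~\ref{assumption:F}; thus part~2 of Lemma~\ref{lemma:W_general} guarantees that $\frac{d}{da}\W(a;\mathbf{b}^e)$ is continuous in $a$, even across the atom locations $b_1^e,\ldots,b_N^e$. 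Consequently, whenever $b_i^e>0$ the best-response property forces the interior stationarity condition $C_i\frac{d}{da}\W(b_i^e;\mathbf{b}^e)=-1$, i.e. $\frac{d}{da}\W(b_i^e;\mathbf{b}^e)=-1/C_i$.

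The second step is purely monotonicity. By part~3 of Lemma~\ref{lemma:W_general}, $\W(\cdot;\mathbf{b}^e)$ is strictly convex, so $a\mapsto\frac{d}{da}\W(a;\mathbf{b}^e)$ is strictly increasing. Because $C_1<C_2<\cdots<C_N$ gives $-1/C_1<-1/C_2<\cdots<-1/C_N$, the stationarity values are strictly increasing in $i$; applying the strictly increasing, hence order-preserving, inverse of $\frac{d}{da}\W(\cdot;\mathbf{b}^e)$ yields $b_i^e$ strictly increasing in $i$ for all coordinates attained in the interior. This delivers the claim as soon as every $b_i^e$ is positive.

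The remaining and most delicate step is to rule out boundary bids $b_i^e=0$ that could break the strict ordering, since the action space is $[0,\overline{b}_i]$ and a zero bid only yields the corner inequality $C_i\frac{d}{da}\W(0^+;\mathbf{b}^e)+1\ge 0$. I would first show that at most one type can bid $0$: if two or more types bid $0$, then any such customer is tied with a positive mass of other zero-bidders, and exactly as in the rent-dissipation discontinuity discussed in Section~\ref{sec:pp_homogeneous}, deviating to an arbitrarily small $\epsilon>0$ leapfrogs this entire tied mass and strictly lowers the cost, contradicting equilibrium. I would then argue that the lone zero-bid, if present, must belong to type~$1$: if instead $b_j^e=0$ with $j\ge 2$, the corner inequality gives $\frac{d}{da}\W(0^+;\mathbf{b}^e)\ge -1/C_j$, while any interior lower-cost type $i<j$ satisfies $\frac{d}{da}\W(b_i^e;\mathbf{b}^e)=-1/C_i<-1/C_j\le\frac{d}{da}\W(0^+;\mathbf{b}^e)$, which would force $b_i^e<0$ by strict monotonicity of the derivative --- impossible. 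Since a zero-bidder cannot be preceded by an interior type and cannot be one of several tied zero-bidders, it must be type~$1$, and then types $2,\ldots,N$ are all interior and strictly ordered by the argument above, with $0=b_1^e<b_2^e$.

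The main obstacle is precisely this boundary bookkeeping: the interior case is an immediate consequence of type-independence plus strict convexity, but establishing the strict inequality requires the discontinuity-at-zero (rent-dissipation) argument to exclude multiple tied zero-bids and to pin the single admissible zero-bid to the cheapest type. If one is instead willing to establish directly that every equilibrium bid is strictly positive --- for example by verifying that $\frac{d}{da}\W(0^+;\mathbf{b}^e)$ is sufficiently negative under the model parameters --- the proof collapses to the two-line first-order-condition-plus-convexity argument.
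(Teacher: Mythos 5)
Your proof is correct in substance but takes a genuinely different route from the paper's. The paper's proof is a two-line revealed-preference exchange argument: supposing an equilibrium profile with $b_i>b_j$ for some $i<j$, uniqueness of the best response (Lemma \ref{lemma:W_general}) gives $b_i-b_j<C_i\left(\W(b_j;\mathbf{b})-\W(b_i;\mathbf{b})\right)$, and since $C_i<C_j$ and the waiting-time difference is positive, the same strict inequality holds with $C_j$ in place of $C_i$, so a type-$j$ customer would strictly prefer $b_i$ to $b_j$ --- a contradiction. That argument needs only monotonicity of $\W$ and uniqueness of best responses: no first-order conditions, no inversion of the derivative, no boundary analysis. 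Your route instead pins each interior bid by $\frac{d}{da}\W(b_i^e;\mathbf{b}^e)=-1/C_i$ and inverts the strictly increasing derivative; this is also valid (parts 2 and 3 of Lemma \ref{lemma:W_general} do cover pure profiles, whose atoms are permitted by Assumption \ref{assumption:F} as long as they sit at positive bids), and it has the virtue of excluding ties between interior bids automatically --- a point on which the paper's proof, which only rules out inversions $b_i>b_j$, is silent.

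Your boundary bookkeeping, however, contains a slip, though a harmless one. In this non-atomic game a \emph{single} type bidding zero already constitutes a positive mass of zero-bidders (its arrival rate $\lambda_j>0$), so each such customer is tied with the mass of her own type; the leapfrog/rent-dissipation deviation to an arbitrarily small $\epsilon>0$ therefore rules out \emph{any} zero bid in equilibrium, not merely configurations where two or more types bid zero. The case $0=b_1^e<b_2^e$ that you retain is thus vacuous --- and, strictly speaking, a profile with an atom at $0$ falls outside Assumption \ref{assumption:F}, so the derivative machinery at $0^+$ that you invoke there is not literally covered by Lemma \ref{lemma:W_general}. Once you observe that every zero bid is excluded outright, your proof collapses to the clean interior argument, exactly as you anticipate in your closing remark; for completeness you should also note that the upper corner $b_i^e=\overline{b}_i$ cannot bind, since bidding $\overline{b}_i$ is dominated by bidding near zero, which is implicit in the paper's definition of $\overline{b}_i$. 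In short: the paper's proof buys brevity and generality (no smoothness used), while yours buys strictness and a constructive characterisation of the equilibrium bids via the first-order conditions that the paper only develops later, in Theorem \ref{thm:pp_hetero_equil}.
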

\begin{proof}
Assume that $\mathbf{b}$ is an equilibrium strategy profile such that $b_i>b_j$ for some $i<j$. From Lemma \ref{lemma:W_general} we know that the best response bid to any strategy profile is unique. In equilibrium, for any single customer of type $i$ the cost of increasing her bid from $b_j$ to $b_i$ is smaller than the corresponding reduction in waiting cost, i.e.,
\[
b_i-b_j<C_i\left(\W(b_j;\mathbf{b})-\W(b_i;\mathbf{b})\right).
\]
But since $C_i<C_j$, for any single customer of type $j$ we also have that
\[
b_i-b_j<C_j\left(\W(b_j;\mathbf{b})-\W(b_i;\mathbf{b})\right),
\]
which contradicts the assumption that $b_j$ is a best response for a type~$j$ customer. \qed
\end{proof}

The implications of Lemma \ref{lemma:b_order} are not merely technical, as they highlight the fact that the equilibrium order of priorities is determined
solely by the order of the costs ($C_i,\ i=1,\ldots,N$) and not by the expected service times ($\overline{x}_i,\ i=1,\ldots,N$). This can come at a considerable cost in terms of social welfare when classes with higher waiting costs also have higher expected service times. We will elaborate on this issue in the sequel.

Suppose that all customers pay according to an ordered bidding profile $\mathbf{b}$. The necessary conditions of the equilibrium are obtained as follows:
\begin{enumerate}
\item The unique best response, $a$, to $\mathbf{b}$ of a singled out customer of type $i=1,\ldots,N$ is given by the first-order condition, $C_i\frac{d}{da}\W(a;\mathbf{b})+1=0$.
\item The symmetric best response, $a_i$, to $\mathbf{b}_{-i}$ for all customers of type $i$ is given by $C_i\frac{d}{d a}\W(a_i;\mathbf{b}_{-i}\cup a_i)+1=0$.
\item The Nash equilibrium is given by the simultaneous symmetric best response of all types $i=1,\ldots,N$.
\end{enumerate}

The solution to the first-order condition of a singled out user of type $i$ lies in exactly one of the possible $N+1$ intervals defining \eqref{eq:pp_hetero_waiting}. Suppose that the solution is one such that $a\in(b_{i-1},b_{i})$, which is sufficient for equilibrium analysis in light of Lemma \ref{lemma:b_order}. By taking derivative of \eqref{eq:pp_hetero_waiting} we have that for $a\in(b_{i-1},b_{i})$,  $\frac{d}{da}\W(a;\mathbf{b})$ equals
\[
-\frac{\sum_{k=1}^{i-1}\rho_k\frac{b_k}{a^2}\left(1-\sum_{k=i}^N \rho_k\left(1-\frac{a}{b_k}\right)\right)+\sum_{k=i}^N\frac{\rho_k}{b_k}\left(\frac{\W_0}{1-\rho}-\sum_{k=1}^{i-1}\rho_k\left(1-\frac{b_k}{a}\W(b_k;\mathbf{b})\right)\right)}{\left(1-\sum_{k=i}^N \rho_k\left(1-\frac{a}{b_k}\right)\right)^2}.
\]
After applying some algebra, the first-order condition for type $i$ is
\[
1-\sum_{k=i}^N\rho_k\left(1-\frac{a}{b_k}\right)=C_i\left(\frac{1}{a^2}\sum_{k=1}^{i-1}\rho_k b_k \W(b_k;\mathbf{b})+\W(a;\mathbf{b})\sum_{k=i}^N\frac{\rho_k}{b_k}\right).
\]

Plugging in $a=b_i$ we derive that the symmetric first-order condition for all type $i$ customers is $\W(b_i;\mathbf{b})=\tilde{\W}(b_i;\mathbf{b})$, where
\begin{equation}\label{eq:pp_hetero_waiting_NE}
\tilde{\W}(b_i;\mathbf{b})=\frac{1-\sum_{k=i}^N\rho_k\left(1-\frac{b_i}{b_k}\right)-\frac{C_i}{b_i^2}\sum_{k=1}^{i-1}\rho_k b_k \W(b_k,\mathbf{b})}{C_i\sum_{k=i}^N\frac{\rho_k}{b_k}}, \quad i=1,\ldots,N.
\end{equation}

The above analysis yields the main result of this section.

\begin{theorem}\label{thm:pp_hetero_equil}
Any symmetric Nash equilibrium $\mathbf{b}^e:=\{b_1^e,\ldots,b_N^e\}$ is ordered in the same order as the waiting costs, and satisfies 
\[
\W(b_i;\mathbf{b})=\tilde{\W}(b_i;\mathbf{b}), \ \forall i=1,\ldots,N.
\]
\end{theorem}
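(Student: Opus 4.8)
The plan is to assemble the two assertions of the theorem from the machinery already in place, since most of the work has been laid out in the surrounding analysis. The ordering claim is immediate: under the standing assumption $C_1 < C_2 < \cdots < C_N$, Lemma~\ref{lemma:b_order} already guarantees that any equilibrium profile satisfies $b_1^e < b_2^e < \cdots < b_N^e$, so nothing further is needed for that half. The genuine content is the fixed-point characterisation $\W(b_i;\mathbf{b}) = \tilde{\W}(b_i;\mathbf{b})$, which I would derive as a necessary first-order condition of each type's cost-minimisation problem.

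First I would fix an equilibrium profile $\mathbf{b} = (b_1^e,\ldots,b_N^e)$ and consider a tagged customer of type $i$ contemplating a bid $a$ while everyone else bids according to $\mathbf{b}$. Because the game is non-atomic, this single customer does not perturb the steady-state aggregates, so her waiting time is exactly $\W(a;\mathbf{b})$ as given by \eqref{eq:pp_hetero_waiting} and her cost is $C_i\W(a;\mathbf{b}) + a$. By part~3 of Lemma~\ref{lemma:W_general} this cost is strictly convex in $a$, hence its unique minimiser is pinned down by the stationarity condition $C_i\frac{d}{da}\W(a;\mathbf{b}) + 1 = 0$; in equilibrium that minimiser is $a = b_i^e$.

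Next I would use the ordering to identify which of the $N+1$ branches of \eqref{eq:pp_hetero_waiting} governs the derivative at the optimum. By Lemma~\ref{lemma:b_order} we have $b_{i-1}^e < b_i^e < b_{i+1}^e$, so on the open interval $(b_{i-1}^e, b_i^e)$ — whose right endpoint is exactly $b_i^e$ — the index sets $\{k : b_k < a\}$ and $\{k : b_k \geq a\}$ equal $\{1,\ldots,i-1\}$ and $\{i,\ldots,N\}$ respectively, reducing \eqref{eq:pp_hetero_waiting} to a fixed quotient in $a$. Differentiating this quotient in $a$, treating each $\W(b_k;\mathbf{b})$ with $k < i$ as a constant, reproduces the derivative displayed in the text. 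Since part~2 of Lemma~\ref{lemma:W_general} ensures that $\frac{d}{da}\W(a;\mathbf{b})$ is continuous across the switching point $a = b_i^e$, I may evaluate this expression at the right endpoint $a = b_i^e$ even though it was derived only on the open interval.

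Finally, substituting $a = b_i^e$ into the stationarity condition and clearing the denominator turns it into a linear relation in $\W(b_i^e;\mathbf{b})$; isolating that quantity yields precisely the right-hand side of \eqref{eq:pp_hetero_waiting_NE}, that is $\W(b_i^e;\mathbf{b}) = \tilde{\W}(b_i^e;\mathbf{b})$, and repeating this for $i = 1,\ldots,N$ gives the full system. The main obstacle I anticipate is not conceptual but \emph{bookkeeping}: correctly differentiating the quotient in \eqref{eq:pp_hetero_waiting} through its two index-dependent sums and then cleanly isolating $\W(b_i^e;\mathbf{b})$, while keeping straight that the $\W(b_k;\mathbf{b})$ with $k<i$ are held fixed during the differentiation. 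The one subtle point to flag is the boundary: the derivative formula is valid only on the open interval, and it is continuity of the derivative (part~2 of Lemma~\ref{lemma:W_general}), rather than any one-sided best-response argument, that licenses its evaluation at the endpoint $a = b_i^e$.
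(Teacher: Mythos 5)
Your proposal is correct and takes essentially the same route as the paper: the ordering is delegated to Lemma~\ref{lemma:b_order}, the first-order condition $C_i\frac{d}{da}\W(a;\mathbf{b})+1=0$ comes from strict convexity (Lemma~\ref{lemma:W_general}), the derivative is computed on the branch $a\in(b_{i-1},b_i)$ of \eqref{eq:pp_hetero_waiting} with the $\W(b_k;\mathbf{b})$, $k<i$, held fixed by non-atomicity, and plugging $a=b_i$ and rearranging yields \eqref{eq:pp_hetero_waiting_NE}. Your explicit appeal to continuity of the derivative (part~2 of Lemma~\ref{lemma:W_general}) to license evaluating the interior-branch expression at the switching point $a=b_i^e$ is precisely the justification the paper leaves implicit here (and makes explicit in the homogeneous case via \eqref{eq:pp_Deriv_fixed}).
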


Note that $\W(b_i;\mathbf{b})$ is given by the recursive expected waiting time solution \eqref{eq:model_waiting_i}, and that $\tilde{\W}(b_i;\mathbf{b})$ is also a recursive formula \eqref{eq:pp_hetero_waiting_NE} in terms of the expected waiting times, and thus an equilibrium solution satisfies both recursions simultaneously.

In Figure \ref{fig:equil_bids} we illustrate the equilibrium solution for an example with five customer types. In particular, the expected waiting times and equilibrium bids are calculated. As the level of utilization grows, all waiting times increase, as expected, and customers increase their bids
as well according to their type.

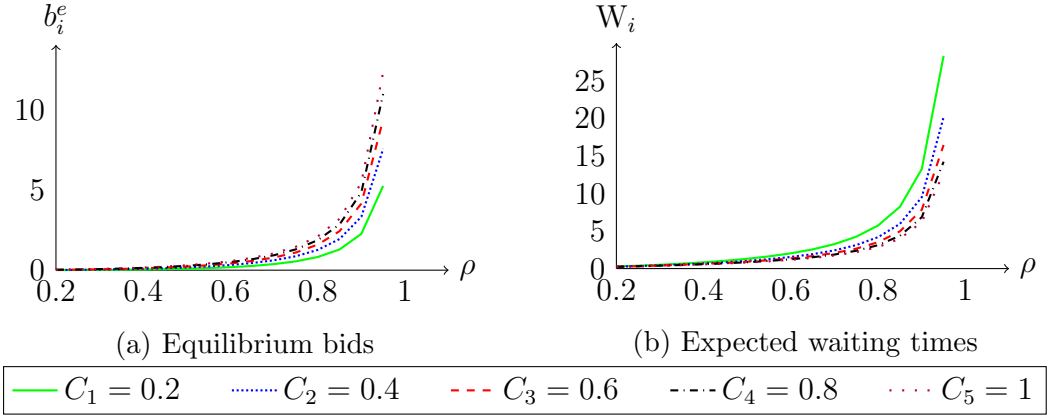
\begin{figure}[H]
\centering
\begin{subfigure}{.48\linewidth}
\centering
\begin{tikzpicture}[xscale=5.8,yscale=0.214]
  \def\xmin{0.2}
  \def\xmax{1.1}
  \def\ymin{0}
  \def\ymax{14}
    \draw[->] (\xmin,\ymin) -- (\xmax,\ymin) node[right] {$\rho$} ;
    \draw[->] (\xmin,\ymin) -- (\xmin,\ymax) node[above] {$b_i^e$} ;
    \foreach \x in {0.2,0.4,0.6,0.8,1}
    \node at (\x,\ymin) [below] {\x};
    \foreach \y in {0,5,10}
    \node at (\xmin,\y) [left] {\y};

    \draw[smooth,green,thick] (0.2,0.005)--	(0.25,0.01)--	(0.3,0.017)--	(0.35,0.027)--	(0.4,0.041)--	(0.45,0.062)--	(0.5,0.091)--	(0.55,0.131)--	(0.6,0.186)--	(0.65,0.265)--	(0.7,0.379)--	(0.75,0.549)--	(0.8,0.819)--	(0.85,1.288)--	(0.9,2.259)--	(0.95,5.238);

    \draw[smooth,blue,densely dotted,thick] (0.2,0.014)--	(0.25,0.025)--	(0.3,0.039)--	(0.35,0.059)--	(0.4,0.086)--	(0.45,0.123)--	(0.5,0.171)--	(0.55,0.236)--	(0.6,0.324)--	(0.65,0.445)--	(0.7,0.616)--	(0.75,0.868)--	(0.8,1.26)--	(0.85,1.936)--	(0.9,3.32)--	(0.95,7.546);

	\draw[smooth,red,dashed,thick] (0.2,0.021)--	(0.25,0.035)--	(0.3,0.055)--	(0.35,0.081)--	(0.4,0.117)--	(0.45,0.165)--	(0.5,0.227)--	(0.55,0.31)--	(0.6,0.421)--	(0.65,0.574)--	(0.7,0.788)--	(0.75,1.101)--	(0.8,1.588)--	(0.85,2.425)--	(0.9,4.14)--	(0.95,9.372);

	\draw[smooth,black,dashdotted,thick] (0.2,0.026)--	(0.25,0.043)--	(0.3,0.067)--	(0.35,0.1)--	(0.4,0.143)--	(0.45,0.199)--	(0.5,0.273)--	(0.55,0.371)--	(0.6,0.502)--	(0.65,0.681)--	(0.7,0.931)--	(0.75,1.297)--	(0.8,1.866)--	(0.85,2.842)--	(0.9,4.842)--	(0.95,10.954);

 	\draw[smooth,purple,loosely dotted,thick] (0.2,0.029)--	(0.25,0.049)--	(0.3,0.076)--	(0.35,0.112)--	(0.4,0.161)--	(0.45,0.224)--	(0.5,0.308)--	(0.55,0.418)--	(0.6,0.565)--	(0.65,0.766)--	(0.7,1.047)--	(0.75,1.458)--	(0.8,2.096)--	(0.85,3.192)--	(0.9,5.44)--	(0.95,12.313);
\end{tikzpicture}
\caption{Equilibrium bids}\label{fig:equil_bids_a}
\end{subfigure}
\begin{subfigure}{.48\linewidth}
\centering
\begin{tikzpicture}[xscale=5.8,yscale=0.1]
  \def\xmin{0.2}
  \def\xmax{1.1}
  \def\ymin{0}
  \def\ymax{30}
    \draw[->] (\xmin,\ymin) -- (\xmax,\ymin) node[right] {$\rho$} ;
    \draw[->] (\xmin,\ymin) -- (\xmin,\ymax) node[above] {$\W_i$} ;
    \foreach \x in {0.2,0.4,0.6,0.8,1}
    \node at (\x,\ymin) [below] {\x};
    \foreach \y in {0,5,10,15,20,25}
    \node at (\xmin,\y) [left] {\y};

    \draw[smooth,green,thick] (0.2,0.283)--	(0.25,0.387)--	(0.3,0.511)--	(0.35,0.659)--	(0.4,0.835)--	(0.45,1.047)--	(0.5,1.306)--	(0.55,1.626)--	(0.6,2.031)--	(0.65,2.556)--	(0.7,3.259)--	(0.75,4.249)--	(0.8,5.741)--	(0.85,8.236)--	(0.9,13.242)--	(0.95,28.297);

    \draw[smooth,blue,densely dotted,thick] (0.2,0.254)--	(0.25,0.339)--	(0.3,0.437)--	(0.35,0.55)--	(0.4,0.682)--	(0.45,0.839)--	(0.5,1.028)--	(0.55,1.26)--	(0.6,1.55)--	(0.65,1.925)--	(0.7,2.426)--	(0.75,3.13)--	(0.8,4.187)--	(0.85,5.954)--	(0.9,9.496)--	(0.95,20.145);

	\draw[smooth,red,dashed,thick] (0.2,0.241)--	(0.25,0.318)--	(0.3,0.405)--	(0.35,0.505)--	(0.4,0.62)--	(0.45,0.754)--	(0.5,0.915)--	(0.55,1.11)--	(0.6,1.353)--	(0.65,1.664)--	(0.7,2.078)--	(0.75,2.655)--	(0.8,3.52)--	(0.85,4.96)--	(0.9,7.834)--	(0.95,16.447);

	\draw[smooth,black,dashdotted,thick] (0.2,0.233)--	(0.25,0.306)--	(0.3,0.387)--	(0.35,0.478)--	(0.4,0.583)--	(0.45,0.705)--	(0.5,0.848)--	(0.55,1.022)--	(0.6,1.236)--	(0.65,1.51)--	(0.7,1.871)--	(0.75,2.374)--	(0.8,3.123)--	(0.85,4.365)--	(0.9,6.836)--	(0.95,14.219);

 	\draw[smooth,purple,loosely dotted,thick] (0.2,0.23)--	(0.25,0.3)--	(0.3,0.377)--	(0.35,0.464)--	(0.4,0.563)--	(0.45,0.676)--	(0.5,0.81)--	(0.55,0.97)--	(0.6,1.166)--	(0.65,1.415)--	(0.7,1.742)--	(0.75,2.196)--	(0.8,2.868)--	(0.85,3.98)--	(0.9,6.185)--	(0.95,12.756);
\end{tikzpicture}
\caption{Expected waiting times}\label{fig:equil_bids_b}
\end{subfigure}

\begin{subfigure}{0.96\linewidth}
\centering
\begin{tikzpicture}
\begin{customlegend}
    [legend entries={$C_1=0.2$,$C_2=0.4$,$C_3=0.6$,$C_4=0.8$,$C_5=1$},legend columns=-1,legend style={/tikz/every even column/.append style={column sep=0.6cm}}]
    \addlegendimage{green,smooth, thick}
    \addlegendimage{blue,densely dotted,smooth, thick}
    \addlegendimage{red,dashed,smooth, thick}
    \addlegendimage{black,dashdotted,smooth, thick}
    \addlegendimage{purple,loosely dotted,smooth, thick}
    \end{customlegend}
\end{tikzpicture}
\end{subfigure}

\caption{Equilibrium bids and waiting times for increasing values of $\rho$. The other parameters are fixed at $N=5$, $\overline{x}=1$, $\overline{x^2}=2$, and $\frac{\lambda_i}{\lambda}\in(0.2,0.3,0.15,0.25,0.1)$.}
\label{fig:equil_bids}
\end{figure}

\begin{figure}[H]
\centering
\begin{subfigure}{.48\linewidth}
\centering
\begin{tikzpicture}[xscale=5.8,yscale=0.577]
  \def\xmin{0.18}
  \def\xmax{1.1}
  \def\ymin{0.8}
  \def\ymax{6}
    \draw[->] (\xmin,\ymin) -- (\xmax,\ymin) node[right] {$\rho$} ;
    \draw[->] (\xmin,\ymin) -- (\xmin,\ymax) node[above] {$\frac{b_i^e}{b_1^e}$} ;
    \foreach \x in {0.2,0.4,0.6,0.8,1}
    \node at (\x,\ymin) [below] {\x};
    \foreach \y in {1,2,3,4,5,6}
    \node at (\xmin,\y) [left] {\y};

    \draw[green,smooth, thick] (0.2,1)--	(0.25,1)--	(0.3,1)--	(0.35,1)--	(0.4,1)--	(0.45,1)--	(0.5,1)--	(0.55,1)--	(0.6,1)--	(0.65,1)--	(0.7,1)--	(0.75,1)--	(0.8,1)--	(0.85,1)--	(0.9,1)--	(0.95,1)-- (0.99,1);

    \draw[blue,densely dotted,smooth, thick] (0.2,2.639)--	(0.25,2.485)--	(0.3,2.327)--	(0.35,2.207)--	(0.4,2.086)--	(0.45,1.981)--	(0.5,1.889)--	(0.55,1.809)--	(0.6,1.741)--	(0.65,1.681)--	(0.7,1.628)--	(0.75,1.581)--	(0.8,1.539)--	(0.85,1.503)--	(0.9,1.47)--	(0.95,1.441)-- (0.99,1.419);

	\draw[red,dashed,smooth, thick] (0.2,3.806)--	(0.25,3.515)--	(0.3,3.247)--	(0.35,3.029)--	(0.4,2.827)--	(0.45,2.654)--	(0.5,2.505)--	(0.55,2.374)--	(0.6,2.263)--	(0.65,2.166)--	(0.7,2.08)--	(0.75,2.006)--	(0.8,1.94)--	(0.85,1.883)--	(0.9,1.833)--	(0.95,1.789)-- (0.99,1.740);

	\draw[black,dashdotted,smooth, thick] (0.2,4.751)--	(0.25,4.356)--	(0.3,3.998)--	(0.35,3.708)--	(0.4,3.437)--	(0.45,3.208)--	(0.5,3.015)--	(0.55,2.842)--	(0.6,2.697)--	(0.65,2.571)--	(0.7,2.46)--	(0.75,2.363)--	(0.8,2.279)--	(0.85,2.206)--	(0.9,2.144)--	(0.95,2.091)-- (0.99,2.011);

 	\draw[purple,loosely dotted,smooth, thick] (0.2,5.373)--	(0.25,4.922)--	(0.3,4.512)--	(0.35,4.181)--	(0.4,3.875)--	(0.45,3.616)--	(0.5,3.395)--	(0.55,3.2)--	(0.6,3.035)--	(0.65,2.891)--	(0.7,2.766)--	(0.75,2.655)--	(0.8,2.56)--	(0.85,2.478)--	(0.9,2.409)--	(0.95,2.351)-- (0.99,2.248);

\end{tikzpicture}
\caption{Equilibrium bids}\label{fig:equil_bid_ratio_a}
\end{subfigure}
\begin{subfigure}{.48\linewidth}
\centering
\begin{tikzpicture}[xscale=5.8,yscale=3.75]
  \def\xmin{0.18}
  \def\xmax{1.1}
  \def\ymin{0.3}
  \def\ymax{1.1}
    \draw[->] (\xmin,\ymin) -- (\xmax,\ymin) node[right] {$\rho$} ;
    \draw[->] (\xmin,\ymin) -- (\xmin,\ymax) node[above] {$\frac{\W_i}{\W_1}$} ;
    \foreach \x in {0.2,0.4,0.6,0.8,1}
    \node at (\x,\ymin) [below] {\x};
    \foreach \y in {0.4,0.6,0.8,1}
    \node at (\xmin,\y) [left] {\y};

    \draw[green,smooth, thick] (0.2,1)--	(0.25,1)--	(0.3,1)--	(0.35,1)--	(0.4,1)--	(0.45,1)--	(0.5,1)--	(0.55,1)--	(0.6,1)--	(0.65,1)--	(0.7,1)--	(0.75,1)--	(0.8,1)--	(0.85,1)--	(0.9,1)--	(0.95,1)-- (0.99,1);

    \draw[blue,densely dotted,smooth, thick] (0.2,0.898)--	(0.25,0.876)--	(0.3,0.855)--	(0.35,0.835)--	(0.4,0.817)--	(0.45,0.801)--	(0.5,0.787)--	(0.55,0.775)--	(0.6,0.763)--	(0.65,0.753)--	(0.7,0.744)--	(0.75,0.737)--	(0.8,0.729)--	(0.85,0.723)--	(0.9,0.717)--	(0.95,0.712)-- (0.99,0.708);

	\draw[red,dashed,smooth, thick] (0.2,0.852)--	(0.25,0.822)--	(0.3,0.793)--	(0.35,0.766)--	(0.4,0.743)--	(0.45,0.72)--	(0.5,0.701)--	(0.55,0.683)--	(0.6,0.666)--	(0.65,0.651)--	(0.7,0.638)--	(0.75,0.625)--	(0.8,0.613)--	(0.85,0.602)--	(0.9,0.592)--	(0.95,0.581)-- (0.99,0.579);

	\draw[black,dashdotted,smooth, thick] (0.2,0.823)--	(0.25,0.791)--	(0.3,0.757)--	(0.35,0.725)--	(0.4,0.698)--	(0.45,0.673)--	(0.5,0.649)--	(0.55,0.629)--	(0.6,0.609)--	(0.65,0.591)--	(0.7,0.574)--	(0.75,0.559)--	(0.8,0.544)--	(0.85,0.53)--	(0.9,0.521)--	(0.95,0.510)-- (0.99,0.502);

 	\draw[purple,loosely dotted,smooth, thick] (0.2,0.813)--	(0.25,0.775)--	(0.3,0.738)--	(0.35,0.704)--	(0.4,0.674)--	(0.45,0.646)--	(0.5,0.62)--	(0.55,0.597)--	(0.6,0.574)--	(0.65,0.554)--	(0.7,0.535)--	(0.75,0.517)--	(0.8,0.5)--	(0.85,0.483)--	(0.9,0.473)--	(0.95,0.46)-- (0.99,0.45);
\end{tikzpicture}
\caption{Expected waiting times}\label{fig:equil_bid_ratio_b}
\end{subfigure}

\begin{subfigure}{.96\linewidth}
\centering
\begin{tikzpicture}
\begin{customlegend}
    [legend entries={$C_1=0.2$,$C_2=0.4$,$C_3=0.6$,$C_4=0.8$,$C_5=1$},legend columns=-1,legend style={/tikz/every even column/.append style={column sep=0.6cm}}]
    \addlegendimage{green,smooth, thick}
    \addlegendimage{blue,densely dotted,smooth, thick}
    \addlegendimage{red,dashed,smooth, thick}
    \addlegendimage{black,dashdotted,smooth, thick}
    \addlegendimage{purple,loosely dotted,smooth, thick}
    \end{customlegend}
\end{tikzpicture}
\end{subfigure}

\caption{Ratio of equilibrium bids and waiting times of all classes with respect to the lowest cost class ($\frac{b_i^e}{b_1^e}$ and $\frac{\W_i}{\W_1}$), and increasing values of $\rho$. The other parameters are fixed at $N=5$, $\overline{x}=1$, $\overline{x^2}=2$, and $\frac{\lambda_i}{\lambda}\in(0.2,0.3,0.15,0.25,0.1)$.}
\label{fig:equil_bid_ratio}
\end{figure}
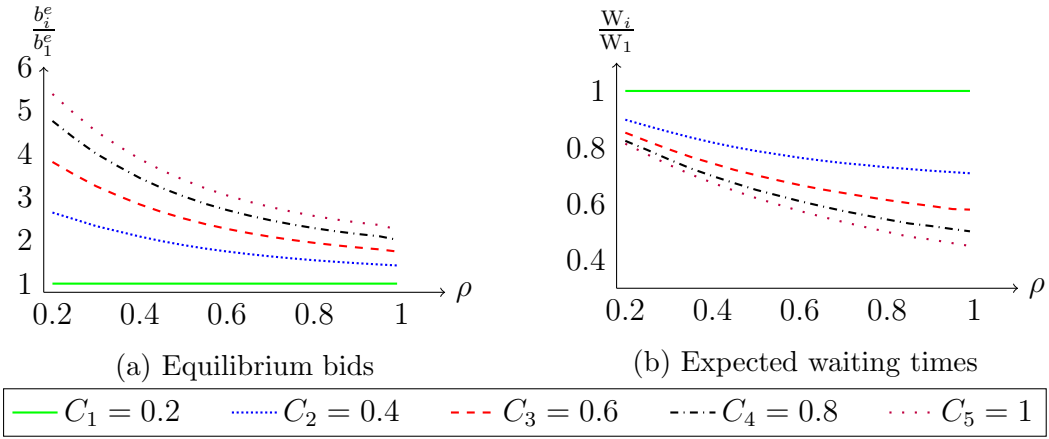

In Figure \ref{fig:equil_bid_ratio}, for the same examples, all equilibrium values are scaled by those of the lowest priority class (i.e. customers with waiting cost $C_1$). The ratio between the bids of the different types decreases with the level of server utilization, while the ratio between waiting times increases. That is, in a busy system customers with lower waiting costs spend more on priority purchasing (than in a less busy system), relative to those with higher waiting costs, but experience relatively longer waiting times. A possible explanation for this is that in a busy system waiting times are typically long enough for a lot of overtaking to take place, even if the difference between accumulation rates is small.

We now wish to dig a bit deeper into the equilibrium properties. It is of interest to study the behaviour of the best response functions determined by the first-order conditions of Theorem \ref{thm:pp_hetero_equil}. We will show a within class analogue of the alternating ATC/FTC phenomena occurs that was discussed in Section \ref{sec:pp_homogeneous}.

\begin{proposition}\label{prop:hetero_tildeW}
Let $b_0=0$ and $b_{N+1}=\overline{b}_N$. For any ordered $\mathbf{b}$ the following properties are satisfied for $i=1,\ldots,N$:
\begin{enumerate}
\item $\tilde{\W}(b_i;\mathbf{b})$ is monotone increasing with respect to $b_i\in(b_{i-1},b_{i+1})$.
\item $\tilde{\W}(b_i;\mathbf{b})$ is monotone decreasing with respect to $b_j\in(b_{j-1},b_{j+1})$, for any $j<i$.
\item $\tilde{\W}(b_i;\mathbf{b})$ is not necessarily monotone in any direction with respect to $b_j\in(b_{j-1},b_{j+1})$, for any $j>i$.
\end{enumerate}
\end{proposition}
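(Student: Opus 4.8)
The plan is to treat $\tilde{\W}(b_i;\mathbf{b})$ as the explicit rational function of the bids given in \eqref{eq:pp_hetero_waiting_NE} and to differentiate it one coordinate at a time, regarding the population waiting times $\W(b_k,\mathbf{b})$, $k<i$, as fixed parameters. These are exactly the quantities entering the first-order condition of a singled-out type-$i$ customer derived just before \eqref{eq:pp_hetero_waiting_NE}, and since the game is non-atomic the deviation of one tagged customer does not perturb them. The first step is therefore bookkeeping: I would record which bid enters which part of \eqref{eq:pp_hetero_waiting_NE}. Writing $\sigma_i:=\sum_{k=i}^N \rho_k/b_k$ and $S:=\sum_{k=1}^{i-1}\rho_k b_k\W(b_k,\mathbf{b})\ge 0$, the numerator is $\mathcal{N}=1-\sum_{k=i}^N\rho_k(1-b_i/b_k)-C_i S/b_i^2$ and the denominator is $\mathcal{D}=C_i\sigma_i$, with $\tilde{\W}=\mathcal{N}/\mathcal{D}$. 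The key observation is that $b_j$ with $j<i$ enters only inside $S$ (through the single term $\rho_j b_j\W(b_j,\mathbf{b})$), that $b_j$ with $j>i$ enters only through $\sum_{k=i}^N\rho_k(1-b_i/b_k)$ and through $\sigma_i$, and that $b_i$ enters both $\mathcal{N}$ and $\mathcal{D}$.

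For part (1) I would isolate the $b_i$-dependence using that the $k=i$ term of $\sum_{k=i}^N\rho_k(1-b_i/b_k)$ vanishes, so $\mathcal{N}=1-\sum_{k=i+1}^N\rho_k+b_i\bar\sigma-C_iS/b_i^2$ with $\bar\sigma:=\sum_{k=i+1}^N\rho_k/b_k$ independent of $b_i$. Then $\partial_{b_i}\mathcal{N}=\bar\sigma+2C_iS/b_i^3>0$ and $\partial_{b_i}\mathcal{D}=-C_i\rho_i/b_i^2<0$, so from $\partial_{b_i}\tilde{\W}=\partial_{b_i}\mathcal{N}/\mathcal{D}+\mathcal{N}(C_i\rho_i/b_i^2)/\mathcal{D}^2$ both contributions are positive on the region where $\mathcal{N},\mathcal{D}>0$ (the economically relevant one, where $\tilde{\W}>0$), giving $\partial_{b_i}\tilde{\W}>0$ throughout $(b_{i-1},b_{i+1})$. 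Part (2) is then immediate from the bookkeeping: for $j<i$ only $S$ depends on $b_j$, so $\partial_{b_j}\mathcal{N}=-C_i\rho_j\W(b_j,\mathbf{b})/b_i^2<0$ while $\mathcal{D}$ is unchanged, whence $\partial_{b_j}\tilde{\W}=-\rho_j\W(b_j,\mathbf{b})/(b_i^2\sigma_i)<0$.

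For part (3), $j>i$, the two occurrences of $b_j$ pull in opposite directions: increasing $b_j$ raises $\rho_j(1-b_i/b_j)$, which lowers $\mathcal{N}$, but it also lowers $\rho_j/b_j$, which lowers $\mathcal{D}$. Concretely $\partial_{b_j}\mathcal{N}=-\rho_j b_i/b_j^2$ and $\partial_{b_j}\mathcal{D}=-C_i\rho_j/b_j^2$, so $\mathcal{D}\,\partial_{b_j}\mathcal{N}-\mathcal{N}\,\partial_{b_j}\mathcal{D}=(C_i\rho_j/b_j^2)(\mathcal{N}-b_i\sigma_i)=(C_i\rho_j\sigma_i/b_j^2)\bigl(C_i\tilde{\W}(b_i;\mathbf{b})-b_i\bigr)$, using $\mathcal{N}=C_i\sigma_i\tilde{\W}$. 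Hence $\operatorname{sign}\partial_{b_j}\tilde{\W}=\operatorname{sign}\bigl(C_i\tilde{\W}(b_i;\mathbf{b})-b_i\bigr)$, which has no fixed sign. I would make this concrete with a small explicit instance, e.g.\ $N=3$ with $i=2$, choosing the service and cost data so that $C_2\tilde{\W}(b_2;\mathbf{b})-b_2$ is positive for one configuration and negative for another, whence $\tilde{\W}(b_2;\mathbf{b})$ is increasing in $b_3$ in one case and decreasing in the other.

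The sign computations in parts (1)–(2) are routine once the dependence structure is fixed, so the main obstacle is really part (3): establishing the clean criterion $\operatorname{sign}\partial_{b_j}\tilde{\W}=\operatorname{sign}(C_i\tilde{\W}(b_i;\mathbf{b})-b_i)$ and then producing an honest counterexample in which both signs genuinely occur — ideally a single profile on which $\tilde{\W}(b_i;\mathbf{b})$ is non-monotone in $b_j$ across its interval, rather than merely two profiles with opposite behaviour. A secondary point to argue carefully is the legitimacy of holding the $\W(b_k,\mathbf{b})$ fixed while differentiating, which I would justify explicitly through the singled-out first-order condition, where these population means are the exogenous inputs.
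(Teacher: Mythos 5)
Your proposal analyzes the wrong function: you differentiate \eqref{eq:pp_hetero_waiting_NE} holding the population waiting times $\W(b_k;\mathbf{b})$, $k<i$, fixed, justified by non-atomicity. That justification conflates the tagged customer's deviation $a$ (which indeed leaves the steady state unchanged) with the class-wide bids $b_i,b_j$ appearing in $\tilde{\W}(b_i;\mathbf{b})$ — but in $\tilde\W$ the deviation has already been set to $a=b_i$, and the proposition's monotonicity statements concern changes in the \emph{symmetric class bids}. Under such a change the steady-state quantities do move: by \eqref{eq:model_waiting_i}, each $\W(b_k;\mathbf{b})$ with $k<i$ depends on $b_i$ (through the term $\rho_i(1-b_k/b_i)$ in its denominator) and is increasing in every $b_j$ with $j\neq k$. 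This full dependence is exactly what the paper's proof tracks, and it is what the downstream corollaries require: the bisection argument of Corollary \ref{corr:FOC_unique_solution} and the ATC/FTC statements of Corollary \ref{corr:BR_ATC_FTC} vary $b_i$ as the bid of the whole class, so a frozen-parameter monotonicity result would not support them.

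The omission is not cosmetic. In part (2), freezing $\W(b_j;\mathbf{b})$ trivializes the one genuinely hard step: the relevant derivative is $\frac{d}{db_j}\bigl[b_j\W(b_j;\mathbf{b})\bigr]$ with $\W(b_j;\mathbf{b})$ \emph{decreasing} in $b_j$, so its sign is not obvious; the paper resolves this by expanding $b_j\W(b_j;\mathbf{b})$ via the recursion and invoking the strict convexity of the waiting time (Lemma \ref{lemma:W_general}) to argue that $\W(b_j;\mathbf{b})$ cannot decrease faster than $b_j$ grows. Your computation $\partial_{b_j}\mathcal{N}=-C_i\rho_j\W(b_j;\mathbf{b})/b_i^2$ simply never meets this difficulty. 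Similarly, in part (1) you drop the term coming from $\frac{d}{db_i}\W(b_k;\mathbf{b})>0$, which increases $S$ and works \emph{against} the claimed monotonicity; the paper's displayed manipulation of $\frac{1}{b_i^2}\W(b_k;\mathbf{b})$ exists precisely to show the ratio still decreases in $b_i$. In part (3) your clean criterion $\operatorname{sign}\partial_{b_j}\tilde{\W}=\operatorname{sign}\bigl(C_i\tilde{\W}(b_i;\mathbf{b})-b_i\bigr)$ likewise omits the $b_j$-dependence of $\W(b_k;\mathbf{b})$ for $k<i$, so it is not a valid criterion for the actual $\tilde\W$; since part (3) is a negative claim, explicit examples suffice (this is all the paper offers), but they must be computed with the true dependence, not from your formula. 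To repair the proposal you would need to redo parts (1) and (2) with total derivatives along class-bid changes, using the recursion \eqref{eq:model_waiting_i} and the monotonicity and convexity facts the paper relies on.
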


We leave this proof for the appendix. Clearly, the waiting $\W(b_i;\mathbf{b})$ decreases with respect to $b_i$ but increases with respect to $b_j$ such that $j\neq i$.  Therefore, if $\tilde{\W}(b_i;\mathbf{b})$ increases (decreases) with $b_j$ then $b_i$ must be increased (decreased) as well to maintain the first-order condition. Proposition \ref{prop:hetero_tildeW} yields the following corollaries.
\begin{corollary}\label{corr:FOC_unique_solution}
If $\mathbf{b}$ is ordered and all $b_j$ such that $j\neq i$ are kept constant then there is at most one solution $b_i\in[b_{i-1},b_{i+1}]$ to the equilibrium condition $\W(b_i;\mathbf{b})=\tilde{\W}(b_i;\mathbf{b})$.
\end{corollary}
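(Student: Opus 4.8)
The plan is to collapse the two–sided equilibrium condition into a statement about the zeros of a single monotone function. Fix every coordinate $b_j$ with $j\neq i$ and regard both $\W(b_i;\mathbf{b})$ and $\tilde{\W}(b_i;\mathbf{b})$ as functions of the one scalar variable $b_i$ ranging over $[b_{i-1},b_{i+1}]$. Define the gap $g(b_i):=\W(b_i;\mathbf{b})-\tilde{\W}(b_i;\mathbf{b})$; a solution of $\W(b_i;\mathbf{b})=\tilde{\W}(b_i;\mathbf{b})$ is precisely a zero of $g$, so it suffices to show that $g$ vanishes at most once on the interval.

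The key step is to read off the opposing monotonicities of the two terms. By part~3 of Lemma~\ref{lemma:W_general} the waiting time $\W(b_i;\mathbf{b})$ is decreasing in the tagged bid $b_i$, while by part~1 of Proposition~\ref{prop:hetero_tildeW} the quantity $\tilde{\W}(b_i;\mathbf{b})$ is strictly increasing in $b_i$ on $(b_{i-1},b_{i+1})$. Hence $g$ is the difference of a (weakly) decreasing function and a strictly increasing one, so $g$ is strictly decreasing; a strictly monotone function attains the value $0$ at most once, which is exactly the assertion. Continuity of both sides—Lemma~\ref{lemma:W_general}(2) for $\W$, and the explicit rational form \eqref{eq:pp_hetero_waiting_NE} of $\tilde{\W}$ with strictly positive denominator $C_i\sum_{k=i}^N \rho_k/b_k$—lets one pass to the closed interval $[b_{i-1},b_{i+1}]$, though strictly speaking continuity is not even needed for an ``at most one'' conclusion.

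I expect essentially no obstacle within the corollary itself: all the genuine work is imported from Proposition~\ref{prop:hetero_tildeW}, whose proof is deferred to the appendix. The one point that needs care is that $\tilde{\W}(b_i;\mathbf{b})$ depends on $b_i$ not only through its explicit occurrences in \eqref{eq:pp_hetero_waiting_NE} but also implicitly through the lower-class waiting times $\W(b_k,\mathbf{b})$ for $k<i$, which themselves shift as $b_i$ varies. The strict monotonicity claimed in part~1 of the proposition is understood to account for this combined dependence, so the correct move is to invoke that statement in its full form rather than differentiating only the visible $b_i$-terms; once the proposition is granted, the corollary follows immediately from the monotonicity comparison above.
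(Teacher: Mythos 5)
Your overall strategy is exactly the paper's: combine part 1 of Proposition \ref{prop:hetero_tildeW} ($\tilde{\W}(b_i;\mathbf{b})$ increasing in $b_i$ on $(b_{i-1},b_{i+1})$) with the decreasingness of $\W(b_i;\mathbf{b})$ in $b_i$ (the paper asserts this in the remark immediately following the proposition), so the two curves cross at most once. There is, however, one genuine flaw in your justification of the $\W$ side. The monotonicity you need is that of the \emph{diagonal} map $b_i\mapsto\W(b_i;\mathbf{b}_{-i}\cup b_i)$, in which the entire class $i$ raises its bid simultaneously; Lemma \ref{lemma:W_general}(3) only gives monotonicity of a single tagged customer's waiting time in her own bid $a$ with the profile $\mathcal{F}$ of everyone else held fixed, and raising the rest of class $i$'s bids pushes the tagged customer's waiting time in the \emph{opposite} direction, so the lemma does not cover the combined effect. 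The two maps really do differ: for $N=1$ the diagonal map is constant, $\W(b;b)=\W_0/(1-\rho)$ for every $b$, while the lemma's map $a\mapsto\W(a;b)$ is strictly decreasing. (You noticed precisely this diagonal-dependence issue for $\tilde{\W}$, but not for $\W$.) The repair is direct from the recursion \eqref{eq:model_waiting_i}: as $b_i$ increases with the order preserved, each lower-class waiting time $\W_k$, $k<i$, and each weight $1-b_k/b_i$ increases, so the numerator falls, while the denominator $1-\sum_{k\geq i}\rho_k\left(1-\frac{b_i}{b_k}\right)$ rises; hence the diagonal map is non-increasing (constant only in the degenerate case $N=1$). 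Since $\tilde{\W}$ is \emph{strictly} increasing in $b_i$ (its denominator $C_i\sum_{k\geq i}\rho_k/b_k$ contains the term $\rho_i/b_i$ and strictly decreases, while its numerator increases), weak monotonicity of $\W$ suffices: your gap function $g$ is still strictly decreasing on the interval, and the at-most-one-zero conclusion stands exactly as you state it.
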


Corollary \ref{corr:FOC_unique_solution} implies that a simple bisection can be used in order to numerically compute the symmetric best response of any single class given the strategy of the others. The next question we wish to address is how does this best response behave when the strategies of other types are changed.

Let $\mathbf{b}_{-i}$ denote an ordered profile of all customers excluding type $i$, and denote the solution set, which is either empty or has a single element, by
\[
b_i(\mathbf{b}):=\{b_i:\W(b_i;\mathbf{b}_{-i}\cup b_i)=\tilde{\W}(b_i;\mathbf{b}_{-i}\cup b_i)\}, \quad i=1,\ldots,N.
\]
We can now define a local symmetric best response of type $i$ customers to $\mathbf{b}_{-i}$,
\[
R_i(\mathbf{b}_{-i}):=\left\{
\begin{array}{ll}
b_{i-1}, & \W(b_{i-1};\mathbf{b}_{-i}\cup b_i)>\tilde{\W}(b_{i-1};\mathbf{b}_{-i}\cup b_i), \\
b_i(\mathbf{b}), & b_i(\mathbf{b})\neq\emptyset, \\
b_{i+1}, & \W(b_{i+1};\mathbf{b}_{-i}\cup b_i)>\tilde{\W}(b_{i+1};\mathbf{b}_{-i}\cup b_i).
\end{array}\right.
\]

Combining Proposition \ref{prop:hetero_tildeW} and Corollary \ref{corr:FOC_unique_solution} we get the following result, regarding the local behaviour of the best response functions for ordered bidding profiles.
\begin{corollary}\label{corr:BR_ATC_FTC}
If $\mathbf{b}_{-i}$ are the ordered bids of all customer types excluding $i$, then
\begin{enumerate}
\item For every $i=1,\ldots,N$, $R_i(\mathbf{b}_{-i})$ is monotone increasing (FTC) with $b_j\in[b_{j-1},b_{j+1}]$ such that $j<i$.
\item For every $i=1,\ldots,N$, $R_i(\mathbf{b}_{-i})$ may be increasing or decreasing with $b_j\in[b_{j-1},b_{j+1}]$ such that $j>i$. Thus, it is neither ATC nor FTC.
\end{enumerate}
\end{corollary}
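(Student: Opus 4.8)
The plan is to treat the symmetric best response of type $i$ as the root of the single scalar equation $\W(b_i;\mathbf{b})=\tilde{\W}(b_i;\mathbf{b})$ and to read off its comparative statics in each $b_j$, $j\neq i$, directly from the monotonicity properties already established. To this end I would introduce the gap function
\[
g_i(b_i;\mathbf{b}_{-i}):=\W(b_i;\mathbf{b}_{-i}\cup b_i)-\tilde{\W}(b_i;\mathbf{b}_{-i}\cup b_i),
\]
so that the interior best response $b_i(\mathbf{b})$ is precisely the zero of $g_i$ on $(b_{i-1},b_{i+1})$, and $R_i(\mathbf{b}_{-i})$ coincides with this zero when it exists and is pinned to the appropriate endpoint of $[b_{i-1},b_{i+1}]$ otherwise.

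First I would record the sign of the dependence of $g_i$ on its own argument. By part 3 of Lemma \ref{lemma:W_general}, $\W(b_i;\mathbf{b})$ is strictly decreasing in $b_i$, while by part 1 of Proposition \ref{prop:hetero_tildeW} the quantity $\tilde{\W}(b_i;\mathbf{b})$ is strictly increasing in $b_i$ on $(b_{i-1},b_{i+1})$. Hence $g_i$ is strictly decreasing in $b_i$. This simultaneously re-derives the at-most-one-solution statement of Corollary \ref{corr:FOC_unique_solution} and fixes the direction in which the root reacts to a vertical shift of $g_i$: raising $g_i$ pointwise pushes its root to the right, lowering it pushes the root to the left. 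The whole argument then reduces to a monotone comparative-statics argument that never needs differentiability, only the monotonicity already in hand.

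Next I would determine the sign of the dependence of $g_i$ on $b_j$. As noted in the text, $\W(b_i;\mathbf{b})$ is increasing in every $b_j$ with $j\neq i$, so the $\W$ contribution to $g_i$ is always positive. For $j<i$, part 2 of Proposition \ref{prop:hetero_tildeW} gives that $\tilde{\W}$ is decreasing in $b_j$, hence $g_i$ is increasing in $b_j$, and the root moves up; for the boundary case $j=i-1$ the lower clamp $b_{i-1}$ also moves up, so the clamped response $R_i$ remains monotone increasing, which is the FTC conclusion of part 1. For $j>i$, part 3 of Proposition \ref{prop:hetero_tildeW} tells us $\tilde{\W}$ need not be monotone in $b_j$, so the $b_j$-dependence of $g_i$ is a difference of a positive term and a term of indeterminate sign; the root can therefore move in either direction, giving the neither-ATC-nor-FTC statement of part 2.

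The main obstacle is this second claim, because asserting that \emph{both} directions genuinely occur is not delivered by a pure sign count: when $\tilde{\W}$ increases in $b_j$ one must weigh its rate of increase against that of $\W$ to decide the sign of $g_i$. I would settle this exactly as part 3 of Proposition \ref{prop:hetero_tildeW} is settled, namely by exhibiting parameter configurations (or a limiting regime) in which the $\tilde{\W}$ term dominates and others in which it does not, so that the $b_j$-dependence of $g_i$ realises both signs. A secondary and more routine point is checking that the clamping to $[b_{i-1},b_{i+1}]$ preserves monotone comparative statics across the corner solutions; this holds because clamping a function that is monotone in $b_j$ to an interval whose endpoints also move monotonically in the same direction leaves the result monotone, so no information is lost at the boundaries.
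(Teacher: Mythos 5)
Your proposal is correct and is essentially the paper's own argument: the corollary is obtained there simply by combining Proposition \ref{prop:hetero_tildeW} with Corollary \ref{corr:FOC_unique_solution}, which is exactly your gap-function comparative statics --- $\W(b_i;\mathbf{b})-\tilde{\W}(b_i;\mathbf{b})$ is strictly decreasing in $b_i$, increases in $b_j$ for $j<i$ (pushing the root, hence $R_i$, upward, giving FTC), and has indeterminate $b_j$-dependence for $j>i$. Your observation that part 2 requires exhibiting instances realising both signs also matches the paper, which settles it the same way it settles part 3 of Proposition \ref{prop:hetero_tildeW}: by (numerical) examples, as in Figure \ref{fig:pp_BR_hetero}.
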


This non-monotone behaviour is illustrated for a numerical example in Figure \ref{fig:pp_BR_hetero}. Specifically, the symmetric best response of type $3$ customers ($R_3$) is computed when a single coordinate is changed from the equilibrium solution, for every one of the other types $i\in\{1,2,4,5\}$.

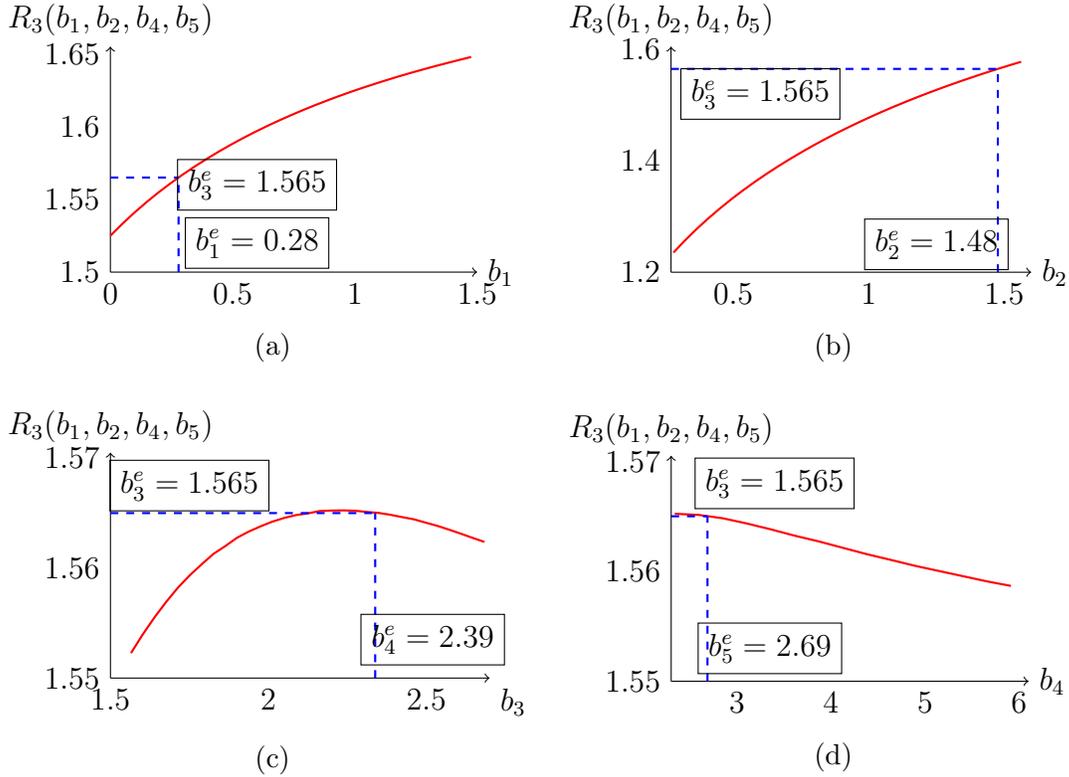
\begin{figure}
\centering
\begin{subfigure}{.48\linewidth}
\begin{tikzpicture}[xscale=3.25,yscale=19.354]
  \def\xmin{0}
  \def\xmax{1.5}
  \def\ymin{1.5}
  \def\ymax{1.655}
    \draw[->] (\xmin,\ymin) -- (\xmax,\ymin) node[right] {$b_1$} ;
    \draw[->] (\xmin,\ymin) -- (\xmin,\ymax) node[above] {$R_3(b_1,b_2,b_4,b_5)$} ;
    \foreach \x in {0,0.5,1,1.5}
    \node at (\x,\ymin) [below] {\x};
    \foreach \y in {\ymin,1.55,1.6,1.65}
    \node at (\xmin,\y) [left] {\y};

        \draw[smooth,red, thick] (0.0001,1.525032)--	(0.0492,1.53323)--	(0.0984,1.540902)--	(0.1476,1.548049)--	(0.1968,1.554776)--	(0.246,1.561029)--	(0.2952,1.566968)--	(0.3444,1.572538)--	(0.3936,1.577793)--	(0.4428,1.582786)--	(0.492,1.587515)--	(0.5412,1.591982)--	(0.5904,1.596239)--	(0.6396,1.600285)--	(0.6888,1.604122)--	(0.738,1.6078)--	(0.7872,1.611321)--	(0.8364,1.614684)--	(0.8856,1.61789)--	(0.9348,1.62099)--	(0.984,1.623933)--	(1.0332,1.626771)--	(1.0824,1.629451)--	(1.1316,1.632079)--	(1.1808,1.634601)--	(1.23,1.637019)--	(1.2792,1.639331)--	(1.3284,1.641591)--	(1.3776,1.643745)--	(1.4268,1.645847)--	(1.476,1.647897);

	\draw[blue, thick, dashed] (0.279,\ymin)--(0.279,1.565);
    \draw[blue, thick, dashed] (\xmin,1.565)--(0.279,1.565);
    \node[draw] at (0.6,1.52) {$b_1^e=0.28$};
    \node[draw] at (0.6,1.56) {$b_3^e=1.565$};

\end{tikzpicture}
\caption{}\label{fig:pp_BR_hetero_1}
\end{subfigure}
\begin{subfigure}{.48\linewidth}
\begin{tikzpicture}[xscale=3.6,yscale=7.41]
  \def\xmin{0.27}
  \def\xmax{1.6}
  \def\ymin{1.2}
  \def\ymax{1.605}
    \draw[->] (\xmin,\ymin) -- (\xmax,\ymin) node[right] {$b_2$} ;
    \draw[->] (\xmin,\ymin) -- (\xmin,\ymax) node[above] {$R_3(b_1,b_2,b_4,b_5)$} ;
    \foreach \x in {0.5,1,1.5}
    \node at (\x,\ymin) [below] {\x};
    \foreach \y in {\ymin,1.4,1.6}
    \node at (\xmin,\y) [left] {\y};

        \draw[smooth,red, thick] (0.2791,1.235287)--	(0.3218,1.257172)--	(0.3646,1.277526)--	(0.4074,1.29652)--	(0.4502,1.314309)--	(0.493,1.331099)--	(0.5358,1.346922)--	(0.5786,1.361922)--	(0.6214,1.376132)--	(0.6642,1.389694)--	(0.707,1.402544)--	(0.7498,1.414878)--	(0.7926,1.426738)--	(0.8354,1.438036)--	(0.8782,1.44888)--	(0.921,1.459207)--	(0.9638,1.469228)--	(1.0066,1.478815)--	(1.0494,1.488109)--	(1.0922,1.497024)--	(1.135,1.505651)--	(1.1778,1.513949)--	(1.2206,1.521979)--	(1.2634,1.529754)--	(1.3062,1.537259)--	(1.349,1.544579)--	(1.3918,1.551598)--	(1.4346,1.558498)--	(1.4774,1.56512)--	(1.5202,1.571612)--	(1.563,1.577853);

    \draw[blue, thick, dashed] (1.477,\ymin)--(1.477,1.565);
    \draw[blue, thick, dashed] (\xmin,1.565)--(1.477,1.565);
    \node[draw] at (1.25,1.25) {$b_2^e=1.48$};
    \node[draw] at (0.6,1.52) {$b_3^e=1.565$};

\end{tikzpicture}
\caption{}\label{fig:pp_BR_hetero_2}
\end{subfigure}

\bigskip
\begin{subfigure}{.48\linewidth}
\begin{tikzpicture}[xscale=4.2,yscale=146.34]
  \def\xmin{1.5}
  \def\xmax{2.7}
  \def\ymin{1.55}
  \def\ymax{1.5705}
    \draw[->] (\xmin,\ymin) -- (\xmax,\ymin) node[below right] {$b_3$} ;
    \draw[->] (\xmin,\ymin) -- (\xmin,\ymax) node[above] {$R_3(b_1,b_2,b_4,b_5)$} ;
    \foreach \x in {1.5,2,2.5}
    \node at (\x,\ymin) [below] {\x};
    \foreach \y in {\ymin,1.56,1.57}
    \node at (\xmin,\y) [left] {\y};

        \draw[smooth,red, thick] (1.5651,1.552281)--	(1.6023,1.553967)--	(1.6396,1.555521)--	(1.6769,1.55694)--	(1.7142,1.558248)--	(1.7515,1.559363)--	(1.7888,1.560355)--	(1.8261,1.561292)--	(1.8634,1.561997)--	(1.9007,1.562754)--	(1.938,1.563325)--	(1.9753,1.563801)--	(2.0126,1.564218)--	(2.0499,1.564557)--	(2.0872,1.564821)--	(2.1245,1.564972)--	(2.1618,1.565191)--	(2.1991,1.565235)--	(2.2364,1.565251)--	(2.2737,1.565209)--	(2.311,1.565114)--	(2.3483,1.565013)--	(2.3856,1.564843)--	(2.4229,1.564639)--	(2.4602,1.564434)--	(2.4975,1.564139)--	(2.5348,1.563837)--	(2.5721,1.56349)--	(2.6094,1.563119)--	(2.6467,1.562743)--	(2.684,1.562383);

    \draw[blue, thick, dashed] (2.338,\ymin)--(2.338,1.565);
    \draw[blue, thick, dashed] (\xmin,1.565)--(2.338,1.565);
    \node[draw] at (2.52,1.5535) {$b_4^e=2.39$};
    \node[draw] at (1.75,1.5675) {$b_3^e=1.565$};

\end{tikzpicture}
\caption{}\label{fig:pp_BR_hetero_3}
\end{subfigure}
\begin{subfigure}{.48\linewidth}
\begin{tikzpicture}[xscale=1.25,yscale=146.34]
  \def\xmin{2.3}
  \def\xmax{6.1}
  \def\ymin{1.55}
  \def\ymax{1.5705}
    \draw[->] (\xmin,\ymin) -- (\xmax,\ymin) node[right] {$b_4$} ;
    \draw[->] (\xmin,\ymin) -- (\xmin,\ymax) node[above] {$R_3(b_1,b_2,b_4,b_5)$} ;
    \foreach \x in {3,4,5,6}
    \node at (\x,\ymin) [below] {\x};
    \foreach \y in {\ymin,1.56,1.57}
    \node at (\xmin,\y) [left] {\y};

        \draw[smooth,red, thick] (2.3381,1.565234)--	(2.5934,1.565128)--	(2.8488,1.564813)--	(3.1042,1.56434)--	(3.3596,1.563815)--	(3.615,1.563237)--	(3.8704,1.562711)--	(4.1258,1.562133)--	(4.3812,1.561555)--	(4.6366,1.561029)--	(4.892,1.560504)--	(5.1474,1.560031)--	(5.4028,1.559558)--	(5.6582,1.559085)--	(5.9136,1.558665);

    \draw[blue, thick, dashed] (2.686,\ymin)--(2.686,1.565);
    \draw[blue, thick, dashed] (\xmin,1.565)--(2.686,1.565);
    \node[draw] at (3.35,1.553) {$b_5^e=2.69$};
    \node[draw] at (3.4,1.568) {$b_3^e=1.565$};

\end{tikzpicture}
\caption{}\label{fig:pp_BR_hetero_4}
\end{subfigure}

\caption{Symmetric best response functions for class $3$ when moving a single coordinate from the equilibrium profile, $\mathbf{b}^e=(0.28,1.48,,1.565,2.39,2.69)$. The other system parameters are $C=(0.2,0.7,0.75,1.25,1.6)$, $\overline{\mathbf{x}}=(0.35,0.85,1,4.5,5)$, $\overline{\mathbf{x}^2}=(2.1,3.7,1.5,21.8,29)$, and $\lambda=(0.06,0.09,0.04,0.07,0.03)$.}\label{fig:pp_BR_hetero}
\end{figure}

This interesting behaviour of the best response functions, together with the non-explicit recursive form of the first-order conditions makes it a cumbersome task to check whether the equilibrium is unique. In particular, verifying that the best response correspondence $R(\mathbf{b}):\mathbbm{R}^N\to \mathbbm{R}^N$ is a contraction mapping. The lack of monotonicity in the first-order conditions implies that there is no submodular form (as defined in \cite{T1979}) which can in some cases be used to show uniqueness or to construct algorithms that converge to equilibrium points (see \cite{Y1995}). We were not able to analytically verify the general diagonally concave conditions of \cite{R1965}, however numerical analysis suggests that they indeed hold for this game (with the proper modification for the non-atomic form). Different numerical methods, such as a naive search on a discretized grid and a best response iteration, converged to a unique equilibrium for all instances tested. The authors believe the equilibrium is unique, but this is left as an open question.

\begin{conjecture}\label{conj_unique}
The Nash equilibrium given by Theorem \ref{thm:pp_hetero_equil} is unique.
\end{conjecture}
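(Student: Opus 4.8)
The plan is to recast the uniqueness question as the existence of a unique zero of a single vector field and then to control the sign structure of its Jacobian. By Theorem \ref{thm:pp_hetero_equil} every symmetric equilibrium is an ordered profile $\mathbf{b}$ solving the coupled system $\Psi_i(\mathbf{b}):=\W(b_i;\mathbf{b})-\tilde{\W}(b_i;\mathbf{b})=0$ for all $i=1,\ldots,N$, so it suffices to show that the map $\Psi=(\Psi_1,\ldots,\Psi_N)$ has a unique zero on the ordered region $\{\mathbf{b}\in\mathcal{B}:b_1\le\cdots\le b_N\}$ (existence is guaranteed by Lemma \ref{lemma:pp_existence}, the ordering by Lemma \ref{lemma:b_order}, and the requisite smoothness on the interior of each ordered cell, together with the matching of one-sided derivatives at the breakpoints $a=b_j$, by Lemma \ref{lemma:W_general}(2) and the rational form of \eqref{eq:model_waiting_i} and \eqref{eq:pp_hetero_waiting_NE}). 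The natural tool is Rosen's diagonal strict concavity criterion \cite{R1965}, adapted to the non-atomic form: if one can exhibit positive weights $r_1,\ldots,r_N$ such that the symmetric part of the weighted Jacobian $\mathrm{diag}(r_i)\,D\Psi(\mathbf{b})$ is sign-definite throughout the ordered region, then $\Psi$ is strictly monotone there and its zero is unique. Equivalently, one may aim to show that $D\Psi$ is nonsingular everywhere and that $\Psi$ is proper, and invoke the Hadamard--L\'evy global inverse function theorem.

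The concrete steps are as follows. First I would read off the sign pattern of $D\Psi$ from the monotonicity already established. Since $\W(b_i;\mathbf{b})$ is decreasing in $b_i$ and increasing in each $b_j$, $j\ne i$ (noted after Proposition \ref{prop:hetero_tildeW}), while by Proposition \ref{prop:hetero_tildeW} the function $\tilde{\W}(b_i;\mathbf{b})$ is increasing in $b_i$ and decreasing in $b_j$ for $j<i$, one obtains $\partial\Psi_i/\partial b_i<0$ (strictly, using the convexity of Lemma \ref{lemma:W_general}) and $\partial\Psi_i/\partial b_j>0$ for every $j<i$. Thus $-D\Psi$ has a strictly positive diagonal and a sign-definite (negative) strictly lower-triangular block; the only entries whose sign is not pinned down are $\partial\Psi_i/\partial b_j$ for $j>i$, because by Proposition \ref{prop:hetero_tildeW}(3) the dependence of $\tilde{\W}(b_i;\mathbf{b})$ on higher-index bids is not monotone. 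The heart of the argument is therefore a quantitative bound: differentiate the closed recursion \eqref{eq:pp_hetero_waiting_NE} to estimate $|\partial\tilde{\W}(b_i;\mathbf{b})/\partial b_j|$ for $j>i$ and compare it against the diagonal term, with the goal of establishing a weighted diagonal dominance of $-D\Psi$ (weights plausibly built from $C_i$ and $\rho_i$). Weighted strict diagonal dominance would deliver nonsingularity of $D\Psi$ everywhere, which already closes the Hadamard route; if the dominance can be arranged by rows and columns at once, it additionally yields the positive-definite symmetric part required by Rosen.

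The main obstacle is exactly this upper-triangular block. The ambiguity in the sign of $\partial\Psi_i/\partial b_j$ for $j>i$ is not an artifact of the estimate but a genuine feature of the game: it is the within-class FTC/ATC reversal of Corollary \ref{corr:BR_ATC_FTC}(2) illustrated in Figure \ref{fig:pp_BR_hetero}, and it is precisely what destroys any super- or submodular structure and hence rules out the lattice-theoretic uniqueness arguments of \cite{T1979,Y1995}. Consequently the dominance comparison must be carried out carefully, controlling the cross-effect of the high bidders through their $\rho_k/b_k$ weights in \eqref{eq:pp_hetero_waiting_NE}; I expect this estimate to be the crux, and the step most likely to require an additional structural assumption (for instance a bound relating the $\rho_i$ to the spread of the $C_i$).

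Should the global estimate prove intractable, I would fall back on degree theory, where only local information at equilibria is needed. Because no customer bids above $\overline{b}_i$ and the rent-dissipation discontinuity at the origin prevents any zero bid in equilibrium, $\Psi$ points outward on the boundary of the ordered region, so by the Poincar\'e--Hopf theorem the indices of its zeros sum to the Euler characteristic $+1$ of the contractible domain. It then suffices to show that $\det D\Psi$ has a constant sign at every equilibrium, which is weaker than global definiteness and may be provable by using the equilibrium relations $\W(b_i;\mathbf{b})=\tilde{\W}(b_i;\mathbf{b})$ to simplify the Jacobian. A parallel route is a homotopy in the cost vector from the homogeneous case, whose equilibrium is unique by Theorem \ref{thm:pp_homo_equil}, tracking the equilibrium manifold and excluding bifurcations via the same nonsingularity of $D\Psi$. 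Both fallbacks ultimately reduce to controlling the sign of the Jacobian, so the ambiguous upper-triangular cross-terms remain the decisive difficulty.
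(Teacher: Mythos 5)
You are being asked to prove a statement that the paper itself does not prove: Conjecture \ref{conj_unique} is explicitly left open, with the authors recording that they ``were not able to analytically verify the general diagonally concave conditions of \cite{R1965}'' and that the lack of monotonicity rules out the submodular machinery of \cite{T1979,Y1995}. Your proposal does not close this gap; it is a research plan whose decisive step is deferred. On your own account everything reduces to a quantitative bound establishing weighted diagonal dominance of $-D\Psi$ over the sign-ambiguous entries $\partial\Psi_i/\partial b_j$ for $j>i$, yet no such bound is derived, and you concede it ``may require an additional structural assumption'' relating the $\rho_i$ to the spread of the $C_i$. A uniqueness theorem conditional on such an assumption would be a strictly weaker statement than the conjecture, which asserts unconditional uniqueness. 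Moreover, the ambiguous upper-triangular block is not a technical nuisance that one should expect to dominate uniformly: as you yourself note, it is the genuine FTC/ATC reversal of Corollary \ref{corr:BR_ATC_FTC} (Figure \ref{fig:pp_BR_hetero}), and it is exactly the obstruction that led the authors to abandon the Rosen route. In effect your main line re-proposes the approach the paper already tried and discarded, without supplying the missing estimate.

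The fallback routes contain further unverified steps, each of which funnels back into the same missing Jacobian control. For Poincar\'e--Hopf on the ordered region you need (i) $\Psi$ to be $C^1$ up to the boundary, but part 2 of Lemma \ref{lemma:W_general} only gives smoothness of $\W(a;\mathbf{b})$ in the deviator's own bid $a$; differentiability of the coupled system in the $b_j$ directions at coincidence points $b_i=b_j$ --- precisely the faces of the ordered region, where \eqref{eq:model_waiting_i} and \eqref{eq:pp_hetero_waiting_NE} change their piecewise form --- is nowhere established; (ii) an outward-pointing condition on \emph{all} of the boundary, whereas your argument addresses only the zero face (rent dissipation) and the caps $b_i=\overline{b}_i$, and says nothing about the coincidence faces $b_i=b_{i+1}$, where no sign information on $\Psi$ is available (Lemma \ref{lemma:b_order} gives strict ordering \emph{at} equilibria, not a boundary sign condition for the field); and (iii) constancy of the sign of $\det D\Psi$ at every zero, which is again the unproven estimate. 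The homotopy from the homogeneous case of Theorem \ref{thm:pp_homo_equil} likewise requires nonsingularity of $D\Psi$ along the entire path to exclude bifurcations, and the Hadamard--L\'evy route additionally needs properness and a domain without boundary or corners, neither of which holds for the ordered cell of $\mathcal{B}$ as stated. As it stands, the proposal correctly identifies the crux but proves nothing beyond what the paper already knew; the conjecture remains open.
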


\section{Regulation and pricing}\label{sec:central_opt}
Suppose now that a central planner can set the priority coefficient for each of the customer types, with the goal of minimizing average waiting costs. Thus, we are interested in characterising the profiles
\[
\mathbf{b}^*:=\argmin_{\{\mathbf{b}\in\mathcal{B}\}}\sum_{i=1}^N \lambda_i C_i\W_i(b_i;\mathbf{b}).
\]

If the waiting time sensitivity is linear, then according to the well known $C\mu$ rule (see for example p69 of \cite{book_H2013}), an optimal service regime is one that assigns absolute priority according to the order of the ratios $C_i\frac{1}{\overline{x}_i}$. If $N=2$ this can be achieved by setting $b_1=0$ and $b_2>0$. Otherwise, if $N>2$ an approximation of optimal regime can be obtained by a scaling of rates $\mathbf{b}^{(n)}$, such that\footnote{This approximation was suggested to us by Binyamin Oz.}
\[
\frac{b_{i}^{(n)}}{b_{i-1}^{(n)}}\xrightarrow{n\to\infty} \infty,\quad i=1,\ldots,N,
\]
where $b_0^{(n)}$ is set to equal an arbitrary positive constant. For example, $b_i^{(n)}=\beta^{ni}$ satisfies the condition for any $\beta>1$.

It should be highlighted that the equilibrium can be very far from optimal as the order of the equilibrium bids is determined solely by the order of the waiting time costs, regardless of the expected waiting times (see Lemma \ref{lemma:b_order}). That is, customers with high waiting time costs and long expected service time will potentially purchase a much higher priority than what is socially desired.

It seems that the AP service regime may not be appropriate if minimizing linear waiting costs is the only goal of the central planner. However, it is possible that this is a system constraint, for example for the sake of fairness or if the condition of patients deteriorates while waiting as in healthcare applications. If this is the case we suggest a simple pricing mechanism based on \cite{MW1990}, that will encourage customers to internalise at least some of the disutility caused by their service time. The mechanism prices the AP rate of every customer proportionally to the realised service time. Specifically, a customer purchasing priority $b$ will pay after receiving service of length $x$ a price of $x b$. The cost function of a type $i$ customer, given the profile $\mathbf{b}$ used by all other customers, is then
\[
c_i(b;\mathbf{b})=C_i\W(b;\mathbf{b})+\overline{x}_i b,\quad i=1,\ldots,N.
\]

Clearly, this game is equivalent to the original game with cost $\frac{C_i}{\overline{x}_i}$ for type $i$ customers. The equilibrium is therefore given by Theorem \ref{thm:pp_hetero_equil} with the updated cost. Moreover, from Lemma \ref{lemma:b_order} we have the following property for the above pricing scheme.
\begin{proposition}\label{lprop:SO_pricing_order}
If ${\boldsymbol \pi}$ is an order permutation such that,
\[
\frac{C_{\pi_1}}{\overline{x}_{\pi_1}}<\frac{C_{\pi_2}}{\overline{x}_{\pi_2}}<\cdots<\frac{C_{\pi_N}}{\overline{x}_{\pi_N}},
\]
then the equilibrium profile $\mathbf{b}^e$, under the service time pricing, satisfies
\[
b_{\pi_1}^e<b_{\pi_2}^e<\cdots<b_{\pi_N}^e.
\]
\end{proposition}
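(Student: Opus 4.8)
The plan is to observe that the service-time pricing game is, up to a positive rescaling of each type's objective, identical to the original purchasing game with waiting-cost parameters $C_i/\overline{x}_i$, and then to invoke Lemma~\ref{lemma:b_order} in this relabelled problem. First I would make the equivalence precise: since $\overline{x}_i>0$ we may factor the type-$i$ cost as
\[
c_i(b;\mathbf{b})=C_i\W(b;\mathbf{b})+\overline{x}_i b=\overline{x}_i\left(\frac{C_i}{\overline{x}_i}\W(b;\mathbf{b})+b\right),
\]
so that the minimiser of $c_i(\cdot;\mathbf{b})$ coincides with the minimiser of $\tfrac{C_i}{\overline{x}_i}\W(\cdot;\mathbf{b})+b$. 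Hence every type's best-response map, and therefore the entire set of Nash equilibria, is exactly that of the original game with the effective waiting cost $\widehat{C}_i:=C_i/\overline{x}_i$.

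Next I would reduce to the hypotheses of Lemma~\ref{lemma:b_order}. That lemma was stated for types labelled so that the waiting costs increase with the index; to apply it here I relabel the classes through the permutation $\boldsymbol{\pi}$, so that $\widehat{C}_{\pi_1}<\widehat{C}_{\pi_2}<\cdots<\widehat{C}_{\pi_N}$. The relabelling changes nothing about the game other than the names of the classes, so the equilibrium bids in the relabelled problem are simply $b^e_{\pi_1},\ldots,b^e_{\pi_N}$. Lemma~\ref{lemma:b_order}, applied with $\widehat{C}$ in place of $C$, then yields $b^e_{\pi_1}<b^e_{\pi_2}<\cdots<b^e_{\pi_N}$, which is exactly the claim.

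The argument carries essentially no analytic difficulty: the entire content is the rescaling identity above, after which Lemma~\ref{lemma:b_order} applies verbatim. The only point that needs a moment's care is that the contradiction argument behind Lemma~\ref{lemma:b_order} compares the best responses of two types at a common pair of bids; but under the pricing scheme both types still face the same waiting-time function $\W(\cdot;\mathbf{b})$ and differ only in the coefficient multiplying it. Concretely, if some pair of classes $\pi_\ell$ (bidding $b$) and $\pi_m$ (bidding $b'$) with $\ell<m$ had the wrong order $b>b'$, the two optimality inequalities chain to
\[
\widehat{C}_{\pi_m}\big(\W(b';\mathbf{b})-\W(b;\mathbf{b})\big)\le b-b'\le \widehat{C}_{\pi_\ell}\big(\W(b';\mathbf{b})-\W(b;\mathbf{b})\big),
\]
and since $\W(\cdot;\mathbf{b})$ is decreasing we have $\W(b';\mathbf{b})-\W(b;\mathbf{b})>0$, forcing $\widehat{C}_{\pi_m}\le\widehat{C}_{\pi_\ell}$ and contradicting the strict ordering of the effective costs. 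I expect the write-up to be short, as no estimate beyond this rescaling and citation is required.
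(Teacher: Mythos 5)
Your proposal is correct and follows essentially the same route as the paper, which likewise observes that the priced objective $C_i\W(b;\mathbf{b})+\overline{x}_i b$ is, after dividing by $\overline{x}_i>0$, the original game with effective costs $C_i/\overline{x}_i$, and then applies Lemma~\ref{lemma:b_order} (via Theorem~\ref{thm:pp_hetero_equil}) under the relabelling $\boldsymbol{\pi}$. Your explicit chaining of the two optimality inequalities is simply an unfolding of that lemma's contradiction argument, so there is no substantive difference.
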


Observe that this pricing scheme requires that the customers know their type, but the system administrator does not necessarily have to be able to distinguish between them. If the administrator does know the customer types then simply charging a type $i$ according to $\overline{x}_i$ and not the realised service time would yield the same result. The prices need not necessarily be in the unit of the service time, and can be scaled to any other units by charging a price of $\alpha x b$ for AP rate $b$, where $\alpha>0$ is a constant. More generally, the price can be non-linear: $\alpha(x) b$. This can potentially change the order of the equilibrium bids, for example to take into account the second, or any other, moment of the service distribution.

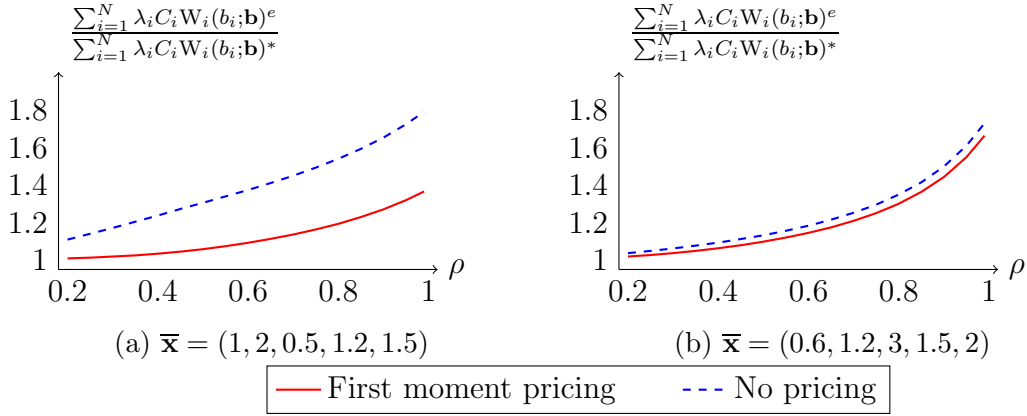
\begin{figure}[H]
\centering
\begin{subfigure}{.48\linewidth}
\begin{tikzpicture}[xscale=6,yscale=2.5]
  \def\xmin{0.18}
  \def\xmax{1.02}
  \def\ymin{0.95}
  \def\ymax{2}
    \draw[->] (\xmin,\ymin) -- (\xmax,\ymin) node[right] {$\rho$} ;
    \draw[->] (\xmin,\ymin) -- (\xmin,\ymax) node[above right] {$\frac{\sum_{i=1}^N\lambda_i C_i\W_i(b_i;\mathbf{b})^e}{\sum_{i=1}^N\lambda_i C_i\W_i(b_i;\mathbf{b})^*}$} ;
    \foreach \x in {0.2,0.4,0.6,0.8,1}
    \node at (\x,\ymin) [below] {\x};
    \foreach \y in {1,1.2,1.4,1.6,1.8}
    \node at (\xmin,\y) [left] {\y};

    \draw[red,smooth, thick] (0.2,1.009)--	(0.25,1.013)--	(0.3,1.019)--	(0.35,1.025)--	(0.4,1.034)--	(0.45,1.045)--	(0.5,1.058)--	(0.55,1.074)--	(0.6,1.092)--	(0.65,1.113)--	(0.7,1.136)--	(0.75,1.163)--	(0.8,1.193)--	(0.85,1.229)--	(0.9,1.27)--	(0.95,1.319)--	(0.99,1.365);

    \draw[blue,dashed, thick] (0.2,1.109)--	(0.25,1.14)--	(0.3,1.171)--	(0.35,1.204)--	(0.4,1.237)--	(0.45,1.272)--	(0.5,1.305)--	(0.55,1.339)--	(0.6,1.373)--	(0.65,1.41)--	(0.7,1.449)--	(0.75,1.492)--	(0.8,1.539)--	(0.85,1.592)--	(0.9,1.652)--	(0.95,1.723)--	(0.99,1.789);

\end{tikzpicture}
\caption{$\overline{\mathbf{x}}=(1,2,0.5,1.2,1.5)$}\label{fig:social_waiting_a}
\end{subfigure}
\begin{subfigure}{.48\linewidth}
\begin{tikzpicture}[xscale=6,yscale=2.5]
  \def\xmin{0.18}
  \def\xmax{1.02}
  \def\ymin{0.95}
  \def\ymax{2}
    \draw[->] (\xmin,\ymin) -- (\xmax,\ymin) node[right] {$\rho$} ;
    \draw[->] (\xmin,\ymin) -- (\xmin,\ymax) node[above right] {$\frac{\sum_{i=1}^N\lambda_i C_i\W_i(b_i;\mathbf{b})^e}{\sum_{i=1}^N\lambda_i C_i\W_i(b_i;\mathbf{b})^*}$} ;
    \foreach \x in {0.2,0.4,0.6,0.8,1}
    \node at (\x,\ymin) [below] {\x};
    \foreach \y in {1,1.2,1.4,1.6,1.8}
    \node at (\xmin,\y) [left] {\y};

    \draw[red,smooth, thick] (0.2,1.019)--	(0.25,1.027)--	(0.3,1.037)--	(0.35,1.049)--	(0.4,1.063)--	(0.45,1.079)--	(0.5,1.098)--	(0.55,1.12)--	(0.6,1.145)--	(0.65,1.174)--	(0.7,1.209)--	(0.75,1.25)--	(0.8,1.3)--	(0.85,1.363)--	(0.9,1.442)--	(0.95,1.548)--	(0.99,1.662);

    \draw[blue,dashed, thick] (0.2,1.037)--	(0.25,1.049)--	(0.3,1.062)--	(0.35,1.077)--	(0.4,1.093)--	(0.45,1.112)--	(0.5,1.132)--	(0.55,1.156)--	(0.6,1.183)--	(0.65,1.215)--	(0.7,1.252)--	(0.75,1.295)--	(0.8,1.349)--	(0.85,1.414)--	(0.9,1.498)--	(0.95,1.609)--	(0.99,1.728);

\end{tikzpicture}
\caption{$\overline{\mathbf{x}}=(0.6,1.2,3,1.5,2)$}\label{fig:social_waiting_b}
\end{subfigure}

\centering
\begin{subfigure}{.5\linewidth}
\begin{tikzpicture}
\begin{customlegend}
    [legend entries={First moment pricing,No pricing},,legend columns=-1,legend style={/tikz/every even column/.append style={column sep=0.8cm}}]
    \addlegendimage{red,thick}
    \addlegendimage{blue,dashed,thick}
\end{customlegend}
\end{tikzpicture}
\end{subfigure}

\caption{Ratio between social welfare in the AP equilibrium and under the absolute $C\mu$ regime, for two different vectors of expected service times. The other parameters are fixed at $N=5$, $\overline{\mathbf{x}^2}=(1.36,5.44,1.5,3.75,5)$, $C_i=(0.2,0.4,0.6,0.8,1.0)$, and $\lambda=(0.16,0.25,0.12,0.21,0.08)$.}
\label{fig:social_waiting}
\end{figure}

In Figure \ref{fig:social_waiting} the effect of the pricing mechanism on the expected waiting times in equilibrium is illustrated for two numerical examples. In Figure \ref{fig:social_waiting_a}, the order of the $C\mu$ rule is completely reversed to that of the waiting costs alone, resulting in significantly worse results without the pricing scheme (up to $\sim 1.8$ higher than optimal without pricing, and up to $\sim 1.2$ higher than optimal with pricing). In Figure \ref{fig:social_waiting_b} we see another example in which expected service times are not homogeneous and the $C\mu$ order is not achieved in equilibrium, but in a milder manner. In this case the pricing scheme does not have much impact on the distance from the optimal waiting times. We can conclude that the pricing scheme is most effective when the order of costs is not aligned with the $C\mu$ order and the expected service times are very non-homogeneous.

\section{Discussion}\label{sec:conclusion}

This paper has presented and analysed a game of purchasing priorities in an accumulating priority queue. We have shown that if waiting costs are linear then the Nash equilibrium is in pure strategies, and provided a general characterisation that enables a computation of the equilibrium bids. Qualitatively, we showed that both ATC and FTC occur at different bidding levels. If we consider a dynamic play starting at a non-equilibrium point then sometimes customers will compete and outbid each other, and sometimes they will do the opposite and bid less in response to higher bids from others. A numerical analysis shows that in busy systems the customers with low waiting costs will wait much longer on average in equilibrium, even though they make similar bids to the ``higher'' priority customers. Finally, if expected service times differ among types and are not aligned with the waiting time costs (in a $C\mu$ sense) then pricing the AP rates according to realised service times can greatly reduce the congestion levels in the system.

A natural extension of our model is allowing the cost functions to be non-linear, which may provide further motivation for applying this type of dynamic priority regime. This analysis can perhaps be carried out (at least numerically) for some non-linear functions using the machinery of \cite{STZ2014} for the general distribution of the AP queue. Note that the $C\mu$ order may no longer be optimal, and a pricing scheme aiming at a generalized $C\mu$ rule, as suggested and shown to be asymptotically optimal in \cite{VM1995}, can be considered.

There are several other interesting game variations that can be studied for the accumulating priority queue. For instance, customers are exogenously assigned priorities and need to decide whether or not to join. The customers may or may not observe the queue state upon arrival. Other options are considering a relative priority (or relative processor sharing) queue, in which the relative priority is accumulated as the waiting time increases.

\section*{Acknowledgements}
The authors wish to thank Refael Hassin and an anonymous referee for their helpful comments. The authors gratefully acknowledge the financial support of Israel Science Foundation grant no. 1319/11.

\bibliography{C:/Users/Liron/Dropbox/University/Research/Full_Bibliography/BigBib}

\section*{Appendix - proofs}

\begin{proof}[Proof of Lemma \ref{lemma:W_general}]
\begin{enumerate}
\item The proof of the recursive waiting time formula \eqref{eq:model_waiting_F} follows the same arguments used in p126-131 of \cite{book_K1976} for the discrete case. We outline the proof without going into all the details. If the AP rate of a singled out customer is $a$ then her expected waiting is
\begin{equation}\label{eq:W_aF}
\begin{split}
\W(a;\mathcal{F}) & = \W_0+\E\sum_{i=1}^N\left[\sum_{j=1}^{N_i(a)}X_{ij}+\sum_{j=1}^{M_i(a)}X_{ij}\right] \\
&= \W_0+\sum_{i=1}^N\overline{x}_i(\M_i(a)+\N_i(a)),
\end{split}
\end{equation}
where $\W_0$ is the expected remaining time of the customer in service, $X_{ij}$ is the service time of the $j$'th type $i$ customer, $\M_i(a)$ is the expected number of type $i$ customers who will overtake her, and $\N_i(a)$ is the expected number of type $i$ customers who were present in the queue upon her arrival and will be admitted into service before her. The second equality in \eqref{eq:W_aF} requires a more cautious consideration. The condition for the Wald identity,
\[
\E\sum_{j=1}^{M_i(a)}X_{ij}=\E M_i(a)\E X_i,
\]
to hold can be stated as (see p158 of \cite{book_d2010})
\[
\E X_m\mathbbm{1}_{\{M_i(a)\geq m\}}=\E X_m\P(M_i(a)\geq m), \ \forall m\geq 1.
\]
This is indeed correct because the event $\{M_i(a)\geq m\}$ simply means that the $m$'th arrival of type $k$ has overtaken the tagged customer, and this only depends on the service times prior to $m$ (recall there is no preemption), and is therefore independent of $X_m$. In a similar manner, if we order the customers present in the queue by their arrival times then the event $\{N_i(a)\geq n\}$ does not depend on $X_n$, but only on the service times $X_1,\ldots,X_{n-1}$.

If the customer arrived at time $0$ and waited for $w\geq 0$ time in the queue then any type $i$ customer with rate $b>a$ that arrived at time $t$ such that $b(w-t)>aw$ has overtaken her. In other words, all arrivals in the interval $\left(0,w\left(1-\frac{a}{b}\right)\right)$ overtake the tagged customer. The arrival rate of such customers is $\lambda_i$, hence by the splitting splitting and superposition properties of the Poisson process and iterating on conditional expectation on the waiting time, we obtain
\begin{equation}\label{eq:app_Ma}
\M_i(a)=\lambda_i \W(a;\mathcal{F})\int_a^\infty\left(1-\frac{a}{b}\right)\ dF_i(b),\quad i=1,\ldots,N.
\end{equation}

We are left with computing the number of type $i$ customers in the queue (upon arrival) that the customer bidding $a$ will not overtake. Clearly she will not overtake any customers who bid $b\geq a$, and the expected number of such customers is $\lambda_i\int_a^\infty \W(b;\mathcal{F})\ dF_i(b)$. A customer with priority $b<a$ who arrived at $t-s$ and waits $v$ time in the queue will still be in the queue if $s<v$ and will not be overtaken if $(v-s)a<vb$. Thus, the probability of a customer with rate $b$ who arrived at $t-s$ being overtaken by the tagged customer who arrived at $t$ is then
\[
\P\left(s<\W(b;\mathcal{F})<\frac{a}{a-b}s\right)=\P\left(\W(b;\mathcal{F})>s\right)-\P\left(\W(b;\mathcal{F})>\frac{a}{a-b}s\right).
\]
Again, we use the properties of the Poisson process, namely splitting and superposition, to obtain the expected number of such customers,
\[
\begin{array}{l}
\lambda_i\int_0^\infty\left[ \P\left(\W(b;\mathcal{F})>s\right)-\P\left(\W(b;\mathcal{F})>\frac{a}{a-b}s\right)\right] ds\ dF_i(b) \\
=\lambda_i\left[\W(b;\mathcal{F})-\left(1-\frac{b}{a}\right)\W(b;\mathcal{F})\right]\ dF_i(b)=\lambda_i\W(b;\mathcal{F})\frac{b}{a}\ dF_i(b)
\end{array}
.
\]
For detailed analysis and justification of the above computations the reader is referred to \cite{book_K1976} and \cite{STZ2014}. By integrating on all customer types we get
\begin{equation}\label{eq:app_Na}
\N_i(a)=\lambda_i\left[\int_0^{a-} \W(b;\mathcal{F})\frac{b}{a}\ dF_i(b)+\int_a^\infty \W(b;\mathcal{F})\ dF_i(b)\right],
\end{equation}
where $a-$ indicates that the integral does include the atom, $dF_i(a)$, if it exists. Combining \eqref{eq:app_Ma} and \eqref{eq:app_Na} we get
\[
\W(a;\mathcal{F})=\W_0+\sum_{i=1}^N\rho_i\left[\int_0^{a-} \W(b;\mathcal{F})\frac{b}{a}\ dF_i(b)+\int_a^\infty\left( \W(a;\mathcal{F})\left(1-\frac{a}{b}\right)+\W(b;\mathcal{F})\ \right)dF_i(b)\right],
\]
or equivalently
\[
\W(a;\mathcal{F})=\frac{\W_0+\sum_{i=1}^N\rho_i\left[\int_0^{a-} \W(b;\mathcal{F})\frac{b}{a}\ dF_i(b)+\int_a^\infty \W(b;\mathcal{F})\ dF_i(b)\right]}{1-\sum_{i=1}^N\rho_i\int_a^\infty\left(1-\frac{a}{b}\right)\ dF_i(b)}.
\]

By the work conservation property we have that
\[
\sum_{i=1}^N\rho_i\int_a^\infty \W(b;\mathcal{F})\ dF_i(b)=\frac{\W_0\rho}{1-\rho}-\sum_{i=1}^N\rho_i\int_0^{a-} \W(b;\mathcal{F})\ dF_i(b),
\]
which leads to the general recursive formula \eqref{eq:model_waiting_F}. Note that $a-$ can be replaced by $a$ because if there is a point mass at $a$ the value inside the integral is zero. Furthermore, since the waiting time is a decreasing function of the bid and $\rho<1$, all above integrals are finite. Specifically, for every $i=1,\ldots,N$ we have
\[
\int_a^\infty \W(b;\mathcal{F})\ dF_i(b) < \W(a;\mathcal{F})\int_a^\infty \ dF_i(b) < \W(a;\mathcal{F})<\infty,
\]
and
\[
\int_a^\infty \left(1-\frac{a}{b}\right)\ dF_i(b) < \int_a^\infty \ dF_i(b)<1.
\]

\item We can rewrite \eqref{eq:model_waiting_F} as
\[
\W(a,\mathcal{F})=\frac{\frac{\W_0}{1-\rho}-\sum_{i=1}^N\rho_i H_i(a)}{1-\sum_{i=1}^N\rho_i J_i(a)},
\]
where
\[
H_i(a):=\int_0^{a}\W(b;\mathcal{F})\left(1-\frac{b}{a}\right)dF_i(b)
\]
and
\[
J_i(a):=\int_a^{\infty}\left(1-\frac{a}{b}\right)dF_i(b).
\]
The only possible points of discontinuity of $H_i(a)$ and $J_i(a)$ are ones such that there is a jump in the measure of integration, i.e. a point $\tilde{a}$ such that $F_i(\tilde{a}-)<F_i(\tilde{a})$ (and $dF_i(\tilde{a})>0$). Observe however, that in these points the value inside the integral is zero. Thus, we have established the continuity of $\W(a,\mathcal{F})$.

The derivative is continuous at $\tilde{a}$ such that $F_i(\tilde{a}-)=F_i(\tilde{a})$ for some $i=1,\ldots,N$ if
\[
\lim_{a\uparrow\tilde{a}}\frac{d}{da}\W(a,\mathcal{F})=\lim_{a\downarrow\tilde{a}}\frac{d}{da}\W(a,\mathcal{F}).
\]

Let $H(a):=\sum_{i=1}^N \rho_i H_i(a)$ and $J(a):=\sum_{i=1}^N \rho_iJ_i(a)$. The first derivative at any point $a$ such that $F_i(a-)=F_i(a), \ \forall i=1,\ldots,N$ is
\[
\begin{split}
\frac{d}{da}\W(a,\mathcal{F}) &= \frac{-H'(a)(1-J(a))+J'(a)\left(\frac{\W_0}{1-\rho}-H(a)\right)}{(1-J(a))^2} \\
&= \frac{\W(a,\mathcal{F})J'(a)-H'(a)}{1-J(a)}.
\end{split}
\]

We have already established that $H(a)$ and $J(a)$ are continuous. Hence, the denominator is continuous. Moreover, the numerator is continuous if
\[
K(a) {:=} \W(a,\mathcal{F})J'(a)-H'(a),
\]
is continuous. With some caution we can apply the derivative chain rule to both integral terms. Using Assumption \ref{assumption:F} that $F$ is defined as a combination of point masses and intervals with positive density we have that
\[
\begin{split}
K(a) &= \sum_{i=1}^N \rho_i\left[ \W(a,\mathcal{F})\frac{d}{da}\int_a^{\infty}\left(1-\frac{a}{b}\right)dF_i(b)-\frac{d}{da}\int_0^{a}\W(b,\mathcal{F})\left(1-\frac{b}{a}\right)dF_i(b)\right] \\
&= -\sum_{i=1}^N \rho_i\left[ \int_a^\infty\W(a,\mathcal{F})\frac{1}{b}dF_i(b)+\int_0^a\frac{b}{a^2}\W(b,\mathcal{F})dF_i(b)\right].
\end{split}
\]

By the right continuity of the \textit{cdf} we have that at any discontinuity point $\tilde{a}$, the term $\W_F(\tilde{a})\frac{1}{\tilde{a}}dF(\tilde{a})$ moves from the left integral to the right integral. Thus, we can conclude that $\lim_{a\uparrow\tilde{a}}K(a)=\lim_{a\downarrow\tilde{a}}K(a)$, for any discontinuity point $\tilde{a}$.
\item First we observe that $K(a)<0$ and therefore $\W(a;\mathcal{F})$ is a monotone decreasing function, as expected. It can further be verified that $J'(a)<0$ and $K'(a)>0$ for all $a>0$, hence
\[
\frac{d^2}{da^2}\W(a;\mathcal{F})=\frac{\rho K'(a)(1-\rho J(a))+K(a)\rho J'(a)}{(1-\rho J(a))^2}>0.
\]

Therefore the expected waiting time of a single customer is strictly convex with respect to a change in her own AP rate, regardless of $F$.
\end{enumerate}
\qed
\end{proof}


\begin{proof}[Proof of Proposition \ref{prop:hetero_tildeW}]
We prove the monotonicity properties of
\[
\tilde{\W}(b_i;\mathbf{b})=\frac{1-\sum_{k=i}^N\rho_k\left(1-\frac{b_i}{b_k}\right)-\frac{C_i}{b_i^2}\sum_{k=1}^{i-1}\rho_k b_k \W(b_k;\mathbf{b})}{C_i\sum_{k=i}^N\frac{\rho_k}{b_k}}, \quad i=1,\ldots,N,
\]
where
\[
\W(b_i;\mathbf{b})=\frac{\frac{\W_0}{(1-\rho)}-\sum_{k=1}^{i-1}\rho_k(1-\frac{b_k}{b_i})\W(b_k;\mathbf{b})}{1-\sum_{k=i}^N \rho_k(1-\frac{b_i}{b_k})}, \ 1 \leq i \leq N.
\]
Throughout the proof we assume that $\mathbf{b}$ is ordered and only allow changes of single coordinates that maintain the order, i.e., $b_i\in(b_{i-1},b_{i+1})$ for $i=1,\ldots,N$.
\begin{enumerate}
\item  We will first show that $\tilde{\W}(b_i;\mathbf{b})$ is monotone increasing w.r.t. $b_i$. The result is immediate for $i=1$, as the sum in the numerator is empty. For $i>1$ it suffices to show that
\[
\frac{1}{b_i^2}\W(b_j;\mathbf{b})=\frac{\frac{\W_0}{(1-\rho)}-\sum_{k=1}^{j-1}\rho_k(1-\frac{b_k}{b_j})\W(b_k;\mathbf{b})}{b_i^2\left(1-\sum_{k=j}^N\rho_k\right)+\sum_{k=j}^N\frac{b_j b_i^2}{b_k}},
\]
is monotone decreasing w.r.t. $b_i$ for all $j<i$. The denominator clearly increases with $b_i$. The expected waiting time $\W(b_k;\mathbf{b})$ increases with any $b_j$ such that $j\neq k$, and as this is the only element of the numerator dependent on $b_i$ we have that the numerator is decreasing.
\item Next we will show that $\tilde{\W}(b_i;\mathbf{b})$ decreases with $b_j$ if $j<i$. By taking derivative we have that
\[
\frac{d}{db_j}\tilde{\W}(b_i;\mathbf{b}) \propto -\rho_j\frac{d}{db_j}\left[b_j\W(b_j;\mathbf{b})\right]-\sum_{\{k\neq j,k<i\}}\rho_k b_k\frac{d}{db_j}\W(b_k;\mathbf{b}).
\]
The sum on the right-hand side is positive because $\frac{d}{db_j}\W(b_k;\mathbf{b})$ is positive for all $k\neq j$. It therefore remains to be shown that
\[
b_j\W(b_j;\mathbf{b})=\frac{\frac{\W_0}{1-\rho}-\sum_{k=1}^{j-1}\left(1-\frac{b_k}{b_j}\right)\W(b_k;\mathbf{b})}{\frac{1}{b_j}\left(1-\sum_{k=j+1}^N\rho_k\right)+\sum_{k=j+1}^N\frac{\rho_k}{b_k}},
\]
is an increasing function w.r.t. $b_j$. For $j=1$ this is clearly true, as the sum in the numerator is empty. Furthermore, the the convexity of the waiting time implies that $\W(b_j;\mathbf{b})$ is decreasing at a decreasing rate, and hence cannot decrease faster than $b_j>b_1$. 

\item The last part of the Lemma is a negative result, that can be verified by examples. Numerical examples such as those in Figure \ref{fig:pp_BR} show that the $\tilde{W}_i$ can be both increasing or decreasing with $b_j$ such that $j>i$. Obviously, explicit numbers can be plugged into the equations for this verification.
\end{enumerate}
\qed
\end{proof}

\end{document}